\renewcommand{\title}[1]{
\begin{center} \Large \bf #1 \end{center}
}
\renewcommand{\author}[2]{
 \begin{center} #1  \vspace{3mm} \\
  #2 \\
 \end{center}
\addvspace{\baselineskip}
}
\newtheorem{thm}{Theorem}[section]
\newtheorem{prop}[thm]{Proposition}
\newtheorem{cor}[thm]{Corollary}
\newtheorem{lem}[thm]{Lemma}
\theoremstyle{definition}
\newtheorem{defn}{Definition}
\theoremstyle{remark}
\newtheorem*{rem}{Remark}
\begin{document}

\baselineskip 5mm

\title{Twisted Fock Representations of Noncommutative K\"ahler Manifolds}

\author{${}^1$ Akifumi Sako and~ ${}^2$
Hiroshi Umetsu }{
${}^1$  Department of Mathematics,
Faculty of Science Division II,\\
Tokyo University of Science,
1-3 Kagurazaka, Shinjuku-ku, Tokyo 162-8601, Japan\\
${}^1$  
Fakult\"at f\"ur Physik, Universit\"at Wien\\
Boltzmanngasse 5, A-1090 Wien, Austria\\
${}^2$
National Institute of Technology, Kushiro College\\
Otanoshike-nishi 2-32-1, Kushiro, Hokkaido 084-0916, Japan}

\noindent
{\bf MSC 2010:} 53D55 , 81R60 
\vspace{1cm}

\abstract{We introduce twisted Fock representations of
noncommutative K\"ahler manifolds and give their explicit
expressions. 
The twisted Fock representation is a representation of the
Heisenberg like algebra whose states are constructed by acting creation
operators on a vacuum state. ``Twisted" means that creation
operators are not Hermitian conjugate of annihilation operators in
this representation. In deformation quantization of K\"ahler manifolds
with separation of variables formulated by Karabegov, local complex
coordinates and partial derivatives of the K\"ahler potential with
respect to coordinates satisfy the commutation relations between
the creation and annihilation operators. Based on these relations, we
construct the twisted Fock representation of noncommutative K\"ahler
manifolds and give a dictionary to translate between the twisted Fock
representations and functions on noncommutative K\"ahler manifolds
concretely.  }

\section{Introduction}

Deformation quantization is a way to construct
noncommutative geometry, which is first
introduced by Bayen, Flato, Fronsdal,
Lichnerowicz and Sternheimer \cite{Bayen:1977ha}. 
Several ways of deformation
quantization were established by \cite{DeW-Lec, Omori, Fedosov, Kontsevich}.
In particular, deformation quantizations of K\"ahler manifolds were
provided in \cite{Moreno86a, Moreno86b, Cahen93, Cahen95}.
In this article, the deformation
quantization with separation of variables is used to construct
noncommutative K\"ahler manifolds that is introduced by Karabegov
\cite{Karabegov, Karabegov1996, Karabegov2011}.
 (For a recent review, see \cite{Schlichenmaier}.)
The deformation quantization is an associative algebra on 
a set of formal power series of $C^{\infty}$ functions with
a star product between formal power series.
One of the advantages of deformation quantization is that
usual analytical techniques are available on noncommutative
manifolds constructed in this way.
On the other hand, when we consider field theories on noncommutative 
manifolds given by deformation quantization, 
physical quantities are given as formal power series,
and there are difficulties to understand them from a viewpoint of physics.
A typical way to solve the difficulties is to make a representation 
of the noncommutative algebra.

The purpose of this article is to construct
the Fock representation of noncommutative K\"ahler manifolds.
The algebras on noncommutative K\"ahler manifolds which are constructed
by deformation quantization with separation of variables contain the
Heisenberg like algebras. Local complex coordinates and partial
derivatives of a K\"ahler potential satisfy the commutation relations
between creation and annihilation operators.  A Fock space is spanned by
a vacuum, which is annihilated by all annihilation operators, and states
obtained by acting creation operators on this vacuum.
The algebras on noncommutative K\"ahler manifolds are represented as
those of linear operators acting on the Fock space. 
We call the representation of the algebra the Fock representation.
In representations studied in this article, creation operators and
annihilation operators are not Hermitian conjugate with each other, in general.
Therefore, the bases of the Fock space are not the Hermitian conjugates of
those of the dual vector space.
In this case, we call the representation the twisted Fock representation.
Historically, Berezin constructed a kind of the Fock representations of some
noncommutative K\"ahler manifolds\cite{Berezin:1974,Berezin:1975wd}, 
and since then there have been
various works on this subject 
\cite{Rawnsley,Schlichenmaier,Schlichenmaier1,Perelomov}.  
In this article, we construct the twisted Fock representation for
an arbitrary noncommutative K\"ahler manifold given by 
deformation quantization with separation of variables 
\cite{Karabegov, Karabegov1996, Karabegov2011}.


One of the main results in this article is summarized as the following 
dictionary, Table \ref{table:result}.
\begin{table}[hbtp]
 \caption{Functions - Fock operators Dictionary }
 \label{table:result}
 \centering
 \begin{tabular}{c|c}
  \hline
  Functions & Fock operators  \\
  \hline \hline
  $\displaystyle e^{-\Phi/\hbar}$  &  $|\vec{0}\rangle \langle \vec{0}|$
      \\
  \hline
  $z_i $    &   $a_i^{\dagger}$ 
      \\ \hline
  $\displaystyle \frac{1}{\hbar}\partial_i \Phi$   &  $\underline{a}_i$
      \\ \hline
  $\bar{z}^i$   
  & $\displaystyle a_i = 
      \sum \sqrt{\frac{\vec{m}!}{\vec{n}!}} H_{\vec{m},\vec{k}}
      H^{-1}_{\vec{k}+\vec{e}_i , \vec{n}} 
      |\vec{m}\rangle \underline{\langle \vec{n} |} $ 
      \\ \hline
  $\displaystyle \frac{1}{\hbar} \partial_{\bar i} \Phi$ 
  & 
      $\displaystyle \underline{a}_i^{\dagger} = \sum 
      \sqrt{\frac{\vec{m}!}{\vec{n}!}}
      (k_i+1) H_{\vec{m},\vec{k}+\vec{e}_i}
      H^{-1}_{\vec{k} , \vec{n}} |\vec{m}\rangle 
      \underline{ \langle \vec{n} |} $
      \\
  \hline
 \end{tabular}
\end{table}
In this dictionary, $z^i , \bar{z}^i \ (i=1, \cdots N)$ are local complex 
coordinates
of some open subset of an $N$ dimensional K\"ahler manifold. 
$\Phi$ is a K\"ahler potential and 
$H$ is defined by
$\displaystyle e^{\Phi/\hbar}= 
\sum H_{\vec{m},\vec{n}}z^{\vec{m}}\bar{z}^{\vec{n}}$, where
$z^{\vec{m}} = z_1^{m_1} z_2^{m_2} \cdots z_N^{m_N} $ for
$\vec{m}=(m_1, m_2, \cdots , m_N)$, and $\bar{z}^{\vec{n}}$ is similarly
defined.
$a_i^{\dagger}$ and $\underline{a}_i$ are essentially a creation operator 
and an annihilation operator, respectively.
$a_i$ and $\underline{a}_i^{\dagger}$ are 
Hermitian conjugate with each other.
Note that $a_i^{\dagger}$ is not a Hermitian conjugate of 
$\underline{a}_i$, in general.
More detailed definitions are given in Section \ref{sect2} and \ref{sect3}.

The twisted Fock algebra is defined on a local coordinate chart.
The star product with separation variables are glued between charts
with nonempty intersections.
Therefore, transition functions between the twisted Fock algebras 
on two charts having an overlapping region 
are also constructed.
Trace operations for the Fock representations as integrations of concerned 
functions are discussed.
We observe several examples, ${\mathbb C}^N$, a cylinder, ${\mathbb C}P^N$
 and ${\mathbb C}H^N$. 


The organization of this article is as follows.
In Section \ref{sect2}, we review several facts of 
deformation quantization with separation of variables 
which are used in this article.
In Section \ref{sect3}, a twisted Fock representation
is constructed on a chart of a general K\"ahler manifold.
In Section \ref{sect4}, transition maps between the twisted Fock
representations on two local coordinate charts are constructed.
In Section \ref{sect5}, we discuss a trace
operation for the twisted Fock representation.
In Section \ref{sect6}, the Fock representations of 
${\mathbb C}^N$, a cylinder, ${\mathbb C}P^N$
and  ${\mathbb C}H^N$ are given as examples.
We summarize our results in Section \ref{sect7}.

\section{A review of the deformation quantization with separation of
 variables} 
\label{sect2}

We give a general definition of deformation quantization, before
moving into the deformation quantization for K\"ahler manifolds.
\begin{defn}[Deformation quantization (weak sense)]
Let $M$ be a Poisson manifold.
$\cal F$ is defined as a set of formal power series:
\begin{eqnarray}
{\cal F} := \left\{  f \ \Big| \ 
f = \sum_k f_k \hbar^k, ~f_k \in C^\infty (M)
\right\} .
\end{eqnarray}
Deformation quantization is defined as a structure of
associative algebra of $\cal F$ whose product is defined by
a star product.
The star product is defined as 
\begin{eqnarray}
f * g = \sum_k C_k (f,g) \hbar^k
\end{eqnarray}
such that the product satisfies the following conditions.
\begin{enumerate}
\item $*$ is associative product.
\item $C_k$ is a bidifferential operator.
\item $C_0$ and $C_1$ are defined as 
\begin{eqnarray}
&& C_0 (f,g) = f g,  \\
&&C_1(f,g)-C_1(g,f) = i \{ f, g \}, \label{weakdeformation}
\end{eqnarray}
where $\{ f, g \}$ is the Poisson bracket.
\item $ f * 1 = 1* f = f$.
\end{enumerate}
\end{defn}
Note that this definition of deformation 
quantization is weaker than the usual definition
of deformation quantization.
The difference between them is in (\ref{weakdeformation}).
In the strong sense of deformation quantization the condition 
$C_1(f,g)= \frac{i}{2} \{ f, g \}$ 
is required.

As a special case of deformation quantizations of K\"ahler manifold $M$,
deformation quantization with separation of variables is 
introduced by Karabegov \cite{Karabegov, Karabegov1996, Karabegov2011}.
\begin{defn}[A star product with separation of variables]
$*$ is called a star product with separation of variables when 
\begin{eqnarray}
a * f = a f \label{l_holo}
\end{eqnarray}
for a holomorphic function $a$ and
\begin{eqnarray} 
f * b = f b \label{r_antiholo}
\end{eqnarray}
for an anti-holomorphic function $b$.
\end{defn}
The deformation quantization defined by using such a star product
is also denoted deformation quantization with separation of variables.
In this article, we consider only this type of deformation quantization
for K\"ahler manifolds.

Let $M$ be an $N$-dimensional complex K\"ahler manifold, $\Phi$ be its K{\" a}hler potential and $\omega$ be its K{\" a}hler 2-form:
\begin{eqnarray}
\omega &:=& i g_{k \bar{l}} dz^{k} \wedge d \bar{z}^{l} ,
\nonumber \\
g_{k \bar{l}} &:=& 
\frac{\partial^2 \Phi}{\partial z^{k} \partial \bar{z}^{l}} .
\end{eqnarray}
Here $g$ is the K\"ahler metric and 
$z^i , \bar{z}^j ~ (i,j = 1, \cdots , N)$ are
local coordinates on an open set $U \subset M$
which is diffeomorphic to a connected open subset 
of ${\mathbb C}^{N}$.
In this paper, we use the Einstein summation convention over repeated
indices. 
The $g^{\bar{k} l}$ is the inverse of the metric $g_{k \bar{l}}$:
\begin{eqnarray}
 g^{\bar{k} l}  g_{l \bar{m}} = \delta_{\bar{k} \bar{m} } .
\end{eqnarray}
In the following, we use the following abridged notations
\begin{align}
\partial_k = \frac{\partial}{\partial z^{k}} , \qquad
\partial_{\bar{k}} = \frac{\partial}{\partial \bar{z}^{k}}.
\end{align}

Karabegov constructed 
a star product with separation of variables for K\"ahler manifolds
in terms of differential operators  \cite{Karabegov,Karabegov1996},
as briefly explained  below.
For the left star multiplication by $f \in {\cal F}$, 
there exists  a differential
operator $L_f$ such that
\begin{equation}
 L_f g = f * g .
\end{equation}
$L_f$ is given as a formal power series in $\hbar$
\begin{equation}
 L_f = \sum_{n=0}^{\infty} \hbar^n A^{(n)},
  \label{Lf-An}
\end{equation}
where $A^{(n)}$ is a differential operator which contains only partial
derivatives  by $z^i ~(i=1, \cdots , N)$ and has the following form
\begin{equation}
 A^{(n)} = \sum_{k\geq 0} a^{(n;k)}_{\bar{i}_1 \cdots \bar{i}_k} 
  D^{\bar{i}_1} \cdots D^{\bar{i}_k},
  \label{An-a}
\end{equation}
where
\begin{equation}
 D^{\bar i} = g^{{\bar i} j} \partial_j ,
\end{equation}
and each $a^{(n;k)}_{\bar{i}_1 \cdots \bar{i}_k}$ is a $C^{\infty}$
function on $M$. 
In particular, $a^{(n;0)}$ acts as a multiplication operator.
Note that the differential operators $D^{\bar i}$ satisfy the following relations,
\begin{align}
 [ D^{\bar i} , D^{\bar j} ] &= 0, \\
 [D^{\bar{i}}, \partial_{\bar{j}}\Phi] &= \delta_{ij}.
 \label{D-dphi}
\end{align}

Karabegov showed the following theorem.
\begin{thm}[Karabegov\cite{Karabegov,Karabegov1996}]
$L_f$ is uniquely determined by requiring the following conditions,
\begin{align}
 L_f 1 = f * 1 =f, 
 \label{Lf-dphi1}\\
 [L_f , \partial_{\bar i} \Phi + \hbar \partial_{\bar i}] = 0,
 \label{Lf-dphi2} 
\end{align}
\end{thm}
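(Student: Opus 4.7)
The plan is to prove uniqueness by expanding $L_f=\sum_n\hbar^n A^{(n)}$ in powers of $\hbar$ and unfolding the two conditions order by order, treating the commutator equation as a recursion that determines $A^{(n)}$ from $A^{(n-1)}$. The condition $L_f 1 = f$ is the simplest to use: since $D^{\bar i}1=g^{\bar i j}\partial_j 1 =0$, only the $k=0$ term of each $A^{(n)}$ survives when applied to $1$, so writing $f=\sum_n\hbar^n f_n$ we immediately get $a^{(n;0)}=f_n$ for every $n$. Condition (\ref{Lf-dphi2}) then expands at order $\hbar^n$ as
\begin{equation*}
[A^{(n)},\partial_{\bar i}\Phi] = -[A^{(n-1)},\partial_{\bar i}],
\end{equation*}
with the convention $A^{(-1)}=0$. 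I would use this as the induction engine: the right-hand side is assumed known at step $n$, and I want to extract $A^{(n)}$ from the left-hand side.

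To invert the map $A\mapsto[A,\partial_{\bar i}\Phi]$, I would exploit the relation $[D^{\bar j},\partial_{\bar i}\Phi]=\delta_{ij}$ from (\ref{D-dphi}) together with the commutativity of the $D^{\bar j}$'s. Taking the coefficients $a^{(n;k)}_{\bar i_1\cdots\bar i_k}$ symmetric in their indices (which is no loss of generality since $D^{\bar i}$'s commute), Leibniz gives
\begin{equation*}
[A^{(n)},\partial_{\bar i}\Phi] \;=\; \sum_{k\geq 1} k\, a^{(n;k)}_{\bar i\,\bar j_1\cdots\bar j_{k-1}}\, D^{\bar j_1}\cdots D^{\bar j_{k-1}}.
\end{equation*}
For each fixed $\bar i$, matching this expansion against the expansion of $-[A^{(n-1)},\partial_{\bar i}]$ in the same normal form reads off $a^{(n;k)}_{\bar i\,\bar j_1\cdots\bar j_{k-1}}$ for every $k\geq 1$. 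The kernel of the map $A^{(n)}\mapsto[A^{(n)},\partial_{\bar i}\Phi]$ consists exactly of the multiplication operators (the $k=0$ piece), and that ambiguity is pinned down by the first condition $a^{(n;0)}=f_n$. Iterating the induction then determines every $A^{(n)}$, which is precisely the uniqueness claim; equivalently, if $\Delta:=L_f-L'_f$ is the difference of two solutions, the same recursion shows all coefficients of $\Delta$ vanish stage by stage.

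The step I expect to be the main technical nuisance is the unique normal-form expansion: I need that any differential operator in the $\partial_j$'s alone admits a unique representation $\sum_k a_{\bar i_1\cdots\bar i_k}D^{\bar i_1}\cdots D^{\bar i_k}$ with symmetric coefficients, so that reading off $a^{(n;k)}$ from the recursion is unambiguous. This I would prove by invertibility of $g^{\bar i j}$ (which lets us rewrite arbitrary $\partial_{j_1}\cdots\partial_{j_k}$-terms in $D$-form) combined with evaluation on the test monomials $z^{\vec m}$, inducting on $|\vec m|$ to peel off the highest-degree coefficients. A related subtlety, more relevant to the existence half that this statement also implicitly needs, is checking that the coefficient extracted from the $\bar i$ slot at order $n$ is symmetric under permutation with the remaining $\bar j$'s; but for uniqueness alone this is automatic once the normal-form uniqueness is in hand.
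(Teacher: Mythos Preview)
The paper does not actually prove this theorem: it is stated as a cited result of Karabegov and the text moves on immediately to associativity and the next theorem. There is therefore no in-paper proof to compare your attempt against.

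That said, your argument is essentially the standard one (and is close to Karabegov's original proof in the cited references). The order-by-order recursion
\[
[A^{(n)},\partial_{\bar i}\Phi]=-[A^{(n-1)},\partial_{\bar i}],
\]
together with the Leibniz computation using $[D^{\bar j},\partial_{\bar i}\Phi]=\delta_{ij}$, is exactly the mechanism that forces the coefficients $a^{(n;k)}$ for $k\ge 1$, while $L_f 1=f$ fixes the $k=0$ pieces. Your identification of the key technical point---uniqueness of the symmetric normal form in the commuting $D^{\bar i}$'s---is correct and is what makes ``reading off'' the coefficients legitimate; the invertibility of $g^{\bar i j}$ reduces this to the standard fact that a differential operator in $\partial_{z^j}$'s has a unique symmetric-coefficient expansion. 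One small clarification worth making explicit: in your kernel step you should quantify over all $\bar i$ simultaneously, since the kernel of $A\mapsto [A,\partial_{\bar i}\Phi]$ for a single fixed $i$ is larger than just multiplication operators (it contains any $D^{\bar j}$-word not involving $D^{\bar i}$); it is only the intersection over all $i$ that cuts down to $k=0$.
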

This star product $*$ satisfies the associative condition
\begin{equation}
 h * (g * f) = (h * g) * f.
\end{equation}
%
Here is a useful theorem given by Karabegov.
\begin{thm}[Karabegov\cite{Karabegov,Karabegov1996}]
The differential operator $L_f$ for an arbitrary function $f$ is
obtained from the operator $L_{{\bar z}^i}$, which corresponds to
the left $*$ multiplication of ${\bar z}^i$,
\begin{equation} \label{Lf_Lz}
 L_f = \sum_{\alpha} \frac{1}{\alpha !} 
  \left(\frac{\partial}{\partial {\bar z}}\right)^\alpha 
  f (L_{\bar z} - {\bar z})^\alpha, 
\end{equation} 
where $\alpha$ is a multi-index.
\end{thm}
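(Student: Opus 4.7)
The strategy is to exploit the uniqueness half of Karabegov's theorem: $L_f$ is characterized by $L_f 1 = f$ together with $[L_f, \partial_{\bar i}\Phi + \hbar \partial_{\bar i}] = 0$. I would denote the right-hand side of \eqref{Lf_Lz} by $\tilde L_f$ and verify that it satisfies these two conditions; uniqueness then gives $\tilde L_f = L_f$. It is convenient to introduce the auxiliary operator
\[
\hat Y^{i} := L_{\bar z^i} - \bar z^i,
\]
so that $\tilde L_f = \sum_{\alpha}\frac{1}{\alpha!}(\partial_{\bar z}^{\alpha} f)\,\hat Y^{\alpha}$, where $\hat Y^{\alpha}=(\hat Y^{1})^{\alpha_1}\cdots(\hat Y^{N})^{\alpha_N}$ and $P_j := \partial_{\bar j}\Phi + \hbar\partial_{\bar j}$.

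The first step is to record three properties of the $\hat Y^{i}$. (i) $\hat Y^{i}\cdot 1 = 0$, since $L_{\bar z^i} 1 = \bar z^i * 1 = \bar z^i$; hence $\hat Y^{\alpha}\cdot 1 = 0$ for $\alpha\neq 0$. (ii) The $\hat Y^{i}$ pairwise commute and commute with every multiplication operator $\bar z^j$. For this one uses associativity together with $\bar z^{i}*\bar z^{j} = \bar z^{i}\bar z^{j}$ to get $L_{\bar z^i}L_{\bar z^j}=L_{\bar z^j}L_{\bar z^i}$, and the right-linearity \eqref{r_antiholo} (writing $\bar z^j g = g*\bar z^j$) to get $L_{\bar z^i}\bar z^j = \bar z^j L_{\bar z^i}$. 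In particular $\hat Y^{\alpha}$ is unambiguous. (iii) The canonical commutation relation
\[
[\hat Y^{i}, P_j] = \hbar\,\delta_{ij}.
\]
Indeed, Karabegov's theorem applied to $f=\bar z^{i}$ gives $[L_{\bar z^i}, P_j]=0$, while $[\bar z^{i}, P_j] = [\bar z^{i}, \hbar\partial_{\bar j}] = -\hbar\delta_{ij}$, so the two pieces combine to $\hbar\delta_{ij}$.

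Given (i)--(iii), the verification of the two defining conditions for $\tilde L_f$ is then essentially a Heisenberg-algebra computation. The first condition is immediate from (i): only the $\alpha = 0$ term of $\tilde L_f\cdot 1$ survives, giving $f$. For the second, I would apply the Leibniz rule in the form
\[
[(\partial_{\bar z}^{\alpha}f)\,\hat Y^{\alpha},\,P_j]
= (\partial_{\bar z}^{\alpha}f)[\hat Y^{\alpha}, P_j] + [\partial_{\bar z}^{\alpha}f, P_j]\,\hat Y^{\alpha}.
\]
The commutator of the multiplication operator is $[\partial_{\bar z}^{\alpha}f, P_j] = -\hbar\,\partial_{\bar z}^{\alpha+e_j}f$; the operator commutator, by a short induction using (ii) and (iii), is $[\hat Y^{\alpha}, P_j] = \hbar\,\alpha_j\,\hat Y^{\alpha - e_j}$. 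Summing against $1/\alpha!$ and relabelling $\beta = \alpha - e_j$ in the first resulting sum (using $\alpha_j/\alpha! = 1/(\alpha - e_j)!$), the two series become identical with opposite signs, so $[\tilde L_f, P_j] = 0$.

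The only point requiring any care is the combinatorial cancellation in the last step, but it reduces to the standard index-shift familiar from the Heisenberg algebra. The real content of the proof is the trio of properties of $\hat Y^{i}$; once (i)--(iii) are in place, the argument is a formal manipulation and Karabegov's uniqueness theorem closes the proof.
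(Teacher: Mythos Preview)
The paper does not give its own proof of this theorem; it is stated with attribution to Karabegov and the references \cite{Karabegov,Karabegov1996}, and then simply used. So there is nothing in the paper to compare your argument against.

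That said, your proof is correct and self-contained. The key ingredients you isolate---(i) $\hat Y^{i}1=0$, (ii) mutual commutativity of the $\hat Y^{i}$ and their commutation with anti-holomorphic multiplication, (iii) the canonical relation $[\hat Y^{i},P_j]=\hbar\delta_{ij}$---are exactly what is needed, and each follows cleanly from the separation-of-variables axioms and the defining condition \eqref{Lf-dphi2}. The Heisenberg-type cancellation in the commutator $[\tilde L_f,P_j]$ is handled correctly, and the uniqueness clause of Karabegov's theorem then closes the argument. One small point worth making explicit (though you implicitly use it) is that $\hat Y^{i}=L_{\bar z^i}-\bar z^i$ is $O(\hbar)$ since the $\hbar^0$ term of $L_{\bar z^i}$ is multiplication by $\bar z^i$; this guarantees that the infinite sum defining $\tilde L_f$ makes sense order by order in $\hbar$ and that $\tilde L_f$ lies in the class of formal operators of the form \eqref{Lf-An}--\eqref{An-a}, so that the uniqueness statement applies. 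This is in fact essentially Karabegov's own argument in \cite{Karabegov1996}.
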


Similarly, the differential operator 
$\displaystyle{R_f = \sum_{n=0}^\infty \hbar^n B^{(n)}}$ corresponding to
the right $*$ multiplication by a function $f$ contains only partial
derivatives by $\bar{z}^i$ and is determined by the conditions
\begin{align} 
 & R_f 1 = 1*f = f, \label{Rf-dphi1} \\
 & [R_f, \partial_i \Phi + \hbar{\partial_i}] = 0.
 \label{Rf-dphi2} 
\end{align}
$B^{(n)}$ has the following form,
\begin{align}
  B^{(n)} = \sum_{k\geq 0} b^{(n;k)}_{i_1 \cdots i_k} 
 D^{i_1} \cdots D^{i_k},
 \label{Br}
\end{align}
where $D^i = g^{i\bar{j}}\partial_{\bar{j}}$ and 
$b^{(n;k)}_{i_1 \cdots i_k} \in C^{\infty}(M)$.
The differential operator $R_f$ for an arbitrary function $f$ is
obtained from the operator $R_{z^i}$, which corresponds to
the right $*$ multiplication by $z^i$,
\begin{equation} \label{Rf_Rz}
 R_f = \sum_{\alpha} \frac{1}{\alpha !} 
  \left(\frac{\partial}{\partial z}\right)^\alpha 
  f (R_z - z)^\alpha. 
\end{equation} 

In particular, the left star product by $\partial_i \Phi$ and the right star
product by $\partial_{\bar{i}} \Phi$ are respectively written as
\begin{align}
 L_{\partial_i \Phi} &= \hbar \partial_i + \partial_i \Phi
 = \hbar e^{-\Phi/\hbar} \partial_i e^{\Phi/\hbar}, 
 \label{LdPhi}\\
 R_{\partial_{\bar{i}} \Phi} 
 &= \hbar \partial_{\bar{i}} + \partial_{\bar{i}} \Phi
 = \hbar e^{-\Phi/\hbar} \partial_{\bar{i}} e^{\Phi/\hbar}.
 \label{RdPhi}
\end{align} 
{}{}From the definition of the star product,
we easily find 
\begin{align}
[ \frac{1}{\hbar} 
\partial_i \Phi ,~ z^j ]_*  &= \delta_{ij}, 
 \qquad 
 [z^i ,~ z^j ]_* = 0, \qquad
 [ \partial_i \Phi ,~ \partial_j \Phi ]_* 
  = 0, 
 \label{comm-rel-1} \\
 [ \bar{z}^i ,~  \frac{1}{\hbar} \partial_{\bar{j}}\Phi ]_* 
 &= \delta_{ij}, \qquad
 [ \bar{z}^i ,~ \bar{z}^j ]_* = 0, \qquad
 [\partial_{\bar{i}} \Phi ,~ \partial_{\bar{j}} \Phi]_* 
 = 0,
 \label{comm-rel-2}
\end{align}
where $[ A ,~ B]_*= A* B - B*A$. 
Hence, $\{z^i, \partial_j \Phi ~|~ i, j=1, 2, \cdots, N \}$ and 
$\{\bar{z}^i, \partial_{\bar{j}} \Phi ~|~ i, j=1, 2, \cdots, N \}$ 
constitute $2N$ sets of the creation and annihilation operators
under the star product. But, it should be noted that operators in 
$\{z^i, \partial_j \Phi\}$ does not commute with ones in 
$\{\bar{z}^i, \partial_{\bar{j}} \Phi\}$, e.g., 
$z^i * \bar{z}^j - \bar{z}^j * z^i \neq 0$.


\section{The Fock representation of noncommutative K\"ahler manifolds}
\label{sect3}
In this section we introduce the Fock space on an 
open set $U \subset M$ which is diffeomorphic to a connected open subset 
of ${\mathbb C}^{N}$ and an algebra as a set of linear operators acting on 
the Fock space.

As mentioned in Section \ref{sect2}, 
from the (\ref{comm-rel-1}) and (\ref{comm-rel-2}) 
$\{z^i, \partial_j \Phi ~|~ i, j=1, 2, \cdots, N \}$ and 
$\{\bar{z}^i, \partial_{\bar{j}} \Phi ~|~ i, j=1, 2, \cdots, N \}$ 
are candidates for the creation and annihilation operators
under the star product $*$.
We introduce 
$a^\dagger_i, a_i, \underline{a}^{\dagger}_i$ and $\underline{a}_i 
~(i= 1,2, \dots , N)$ by
\begin{align}
 a^\dagger_i =  z^i, ~~~ 
 \underline{a}_i = \frac{1}{\hbar}  \partial_i \Phi,~~~
 a_i =\bar{z}^i, ~~~  
 \underline{a}_i^\dagger = \frac{1}{\hbar}  \partial_{\bar{i}} \Phi.
\label{creation_anihilation}
\end{align} 
Then they satisfy the following commutation relations which are similar
to the usual commutation relations for the creation and annihilation operators but slightly
different, 
\begin{align}
[ \underline{a}_i ,~ a^{\dagger}_j ]_*  &= \delta_{ij}, 
 \qquad 
 [a_i^{\dagger} ,~ a_j^{\dagger} ]_* = 0, \qquad
 [ \underline{a}_i ,~ \underline{a}_j  ]_* 
  = 0, 
 \label{comm-rel-1+1} \\
 [ a_i,~  \underline{a}_j^\dagger ]_* 
 &= \delta_{ij}, \qquad
 [ \underline{a}_i^{\dagger} ,~ \underline{a}_j^{\dagger} ]_* = 0, 
 \qquad
 [a_i  ,~ a_j ]_* 
 = 0.
 \label{comm-rel-2+1}
\end{align}
There are differences from ordinary creation and annihilation operators 
that these two sets of creation and annihilation operators 
are not given as direct sum, in other words, 
\begin{align}
[ a_i , a_i^{\dagger} ]_*  ~~  \mbox{and} ~~  
[ \underline{a}_i  , ~ \underline{a}_j^{\dagger} ]_*
\end{align}
do not vanish in general.

The star product with separation of variables has the following property
under the complex conjugation.

\begin{prop}
\begin{align}
\overline{f*g}  = \overline{ L_f g } = \bar{g} * \bar{f}
\end{align}
 \label{comp-conj}
\end{prop}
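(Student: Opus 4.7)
The first equality is immediate, since $L_f$ was defined by $L_f g = f*g$, so nothing needs to be shown there. The content of the proposition is the identity $\overline{f*g} = \bar g * \bar f$, and the plan is to derive it from Karabegov's uniqueness characterization of the right-multiplication operator $R_h$.

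Introduce the antilinear involution $S$ on ${\cal F}$ defined by $S(h) := \bar h$. Then $\overline{f*g} = \overline{L_f g} = S\bigl(L_f (S\bar g)\bigr) = (S L_f S)\bar g$, so it suffices to prove the operator identity
\begin{equation*}
  S\, L_f\, S \;=\; R_{\bar f}.
\end{equation*}
By Karabegov's uniqueness result quoted above, $R_{\bar f}$ is the unique differential operator containing only $\partial_{\bar i}$-derivatives that satisfies $R_{\bar f} 1 = \bar f$ and $[R_{\bar f},\,\partial_i\Phi + \hbar\partial_i] = 0$. The plan is therefore to check these three properties for $S L_f S$.

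First, from the form (\ref{Lf-An})-(\ref{An-a}) the operator $L_f$ is a formal power series in $\hbar$ whose coefficients are differential polynomials in $\partial_i$ with $C^\infty$ coefficients; since $S$ converts $\partial_i$ into $\partial_{\bar i}$ and conjugates the coefficient functions (here one uses that the metric coefficients of $D^{\bar i}=g^{\bar i j}\partial_j$ satisfy $\overline{g^{\bar i j}} = g^{\bar j i}$), $S L_f S$ is a differential operator containing only $\partial_{\bar i}$-derivatives, as required for a right-multiplication operator. Second, $(S L_f S)\,1 = S (L_f\,1) = S f = \bar f$, using condition (\ref{Lf-dphi1}). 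Third, because $\Phi$ is real one has $S(\partial_{\bar i}\Phi + \hbar\partial_{\bar i})S = \partial_i\Phi + \hbar\partial_i$, so
\begin{equation*}
  [S L_f S,\ \partial_i\Phi + \hbar\partial_i]
  \;=\; S\,[L_f,\ \partial_{\bar i}\Phi + \hbar\partial_{\bar i}]\,S \;=\; 0,
\end{equation*}
by (\ref{Lf-dphi2}). Invoking Karabegov's uniqueness for $R$, this forces $S L_f S = R_{\bar f}$, and then $\overline{L_f g} = (S L_f S)\bar g = R_{\bar f}\bar g = \bar g * \bar f$.

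The only step that requires care is the reality/structure check: verifying that conjugation by $S$ turns the holomorphic-derivative operator $L_f$ into an operator involving only antiholomorphic derivatives, with the correctly conjugated coefficient functions. This is essentially bookkeeping about how $S$ acts on $g^{\bar i j}\partial_j$ and on smooth coefficient functions, combined with reality of $\Phi$; once that is in hand, Karabegov's uniqueness does all the work and no explicit manipulation of the star product is needed.
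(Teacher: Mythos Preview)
Your proof is correct and follows essentially the same approach as the paper: both argue that the conjugated operator $SL_fS$ (which the paper simply writes as $\overline{L_f}$) satisfies the defining conditions (\ref{Rf-dphi1})--(\ref{Rf-dphi2}) for $R_{\bar f}$, and then invoke Karabegov's uniqueness. Your version is more explicit about the structure check (that conjugation turns holomorphic into antiholomorphic derivatives), which the paper leaves implicit, but the core logic is identical.
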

\begin{proof}
 As described in the previous section, $L_f$ and $R_f$ are uniquely
 determined by the equations (\ref{Lf-dphi1}), (\ref{Lf-dphi2}),
 (\ref{Rf-dphi1}), and (\ref{Rf-dphi2}). 
 From the complex conjugation of
 (\ref{Lf-dphi1}) and (\ref{Lf-dphi2}), we find
 \begin{equation}
  \overline{L_f} 1 = \bar{f}, \qquad
  [\overline{L_f}, ~\partial_i \Phi + \hbar \partial_i] = 0.
 \end{equation} 
 Because of the uniqueness of solution of (\ref{Rf-dphi1}) and
 (\ref{Rf-dphi2}), $\overline{L_f}$ is equal to $R_{\bar{f}}$.
\begin{align}
 \overline{f*g} = \overline{L_f g} = \overline{L_f} \bar{g}
 = R_{\bar{f}} \bar{g} = \bar{g} * \bar{f}.
\end{align}
\end{proof}
%

The Fock space is defined by a vector space spanned by the bases
which is generated by acting 
$a_i^\dagger$ on $|\vec{0}\rangle$,
\begin{align}
 |{\vec n} \rangle  
 &= |n_1, \cdots, n_N \rangle  \nonumber \\
 &= c_1(\vec{n})
 (a^{\dag}_1)_{*}^{n_1} * \cdots * (a^{\dag}_N)_{*}^{n_N} * 
 |\vec{0} \rangle ,
 \label{base-n}
\end{align}
where $|{\vec 0} \rangle =  |0, \cdots, 0 \rangle $ satisfies
\begin{align}
\underline{a}_i * |{\vec 0} \rangle = 0  \ \ ( i= 1, \cdots , N ) 
\end{align}
and $\displaystyle (A)_*^n $ stands for $\overbrace{A * \cdots  * A}^n$.
$c_1(\vec{n})$ is a normalization coefficient which
does not depend on $z^i$ and $\bar{z}^i$.  
Here, we define the basis of a dual vector space by acting
$\underline{a}_i$ on $\langle \vec{0}|$,
\begin{align}
 \underline{\langle{\vec m}|} 
 &=  \underline{\langle m_1, \cdots, m_N|} \nonumber \\
 &=  \langle \vec{0}| *
 (\underline{a}_1)_{*}^{m_1} * \cdots * 
 (\underline{a}_N)_{*}^{m_N} 
 c_2(\vec{m}),
 \label{base-m}
\end{align}
and
\begin{align}
 \langle \vec{0}| * a_i^{\dagger} =0 \qquad 
 ( i= 1, \cdots , N ),
\end{align}
where $c_2({\vec{m}})$ is also a normalization constant.
The underlines are attached to the bra vectors in
order to emphasize that $\underline{\langle{\vec m}|}$ is not Hermitian
conjugate to $| \vec{m} \rangle$. 
In this article, we set the normalization constants as
\begin{align}
c_1 (\vec{n})= \frac{1}{\sqrt{\vec{n}!}},  \ \ \ 
c_2(\vec{n}) = \frac{1}{\sqrt{\vec{n}!}},
\end{align}
where $\vec{n}! = n_1 ! n_2 ! \cdots n_N !$.

\begin{defn}
The local twisted Fock algebra (representation) $F_U$ 
is defined as a algebra given by a set of linear operators
acting on the Fock space defined on $U$:
\begin{align}
F_U:= \{
 \sum_{ \vec{n} , \vec{m} } A_{\vec{n} \vec{m} } 	
|\vec{n} \rangle \underline{ \langle \vec{m} | }~ | 
 ~ A_{\vec{n} \vec{m} } \in {\mathbb C} \}.
\end{align}
and 
products between its elements are given by the star product $*$.
\end{defn}
In the remaining part of this section, 
we construct concrete expressions of functions which are elements of 
this local twisted Fock algebra.

\begin{lem} [Berezin] \label{lem-1}
For arbitrary K\"ahler manifolds $(M, \omega)$,
there exists a K\"ahler potential 
$\Phi (z^1,\dots ,z^N , \bar{z}^1 , \dots , \bar{z}^N)$ such that 
\begin{align}
\Phi( 0,\dots ,0 , \bar{z}^1 , \dots , \bar{z}^N) = 0 , ~
\Phi (z^1,\dots ,z^N , 0 , \dots , 0)=0. 
\label{Phi_cond}
\end{align}
\end{lem}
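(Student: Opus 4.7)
The plan is to observe that Kähler potentials for the same $\omega$ differ by a pluriharmonic function, and to exploit this freedom to gauge away the restrictions of the potential to the coordinate axes $\{\bar z=0\}$ and $\{z=0\}$. Start with any Kähler potential $\Phi_0$ on the chart $U$, which we assume contains the point $p$ with coordinates $z=0$ (any chart can be translated so this holds). Since we are in the framework of Karabegov's formal calculus, $\Phi_0$ admits a Taylor expansion
\begin{equation*}
\Phi_0(z,\bar z)=\sum_{\vec m,\vec n}c_{\vec m,\vec n}\,z^{\vec m}\bar z^{\vec n},
\end{equation*}
and I can unambiguously define $\Phi_0(z,0)$ as the purely holomorphic series $\sum_{\vec m}c_{\vec m,0}\,z^{\vec m}$ and $\Phi_0(0,\bar z)$ as the purely anti-holomorphic series $\sum_{\vec n}c_{0,\vec n}\,\bar z^{\vec n}$.

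Then I would set
\begin{equation*}
\Phi(z,\bar z):=\Phi_0(z,\bar z)-\Phi_0(z,0)-\Phi_0(0,\bar z)+\Phi_0(0,0),
\end{equation*}
and verify the three required properties. First, substituting $\bar z=0$ (resp.\ $z=0$) gives $\Phi(z,0)=0$ (resp.\ $\Phi(0,\bar z)=0$) by direct cancellation. Second, the subtracted piece is pluriharmonic: $\Phi_0(z,0)$ is annihilated by $\partial_{\bar j}$, $\Phi_0(0,\bar z)$ is annihilated by $\partial_i$, and $\Phi_0(0,0)$ is constant. Hence
\begin{equation*}
\partial_i\partial_{\bar j}\Phi=\partial_i\partial_{\bar j}\Phi_0=g_{i\bar j},
\end{equation*}
so $\Phi$ is a Kähler potential for the same metric, and consequently for the same 2-form $\omega$.

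Third, I must check that $\Phi$ is real. Reality of $\Phi_0$ means $\overline{c_{\vec m,\vec n}}=c_{\vec n,\vec m}$; in particular $\overline{c_{\vec m,0}}=c_{0,\vec m}$, so
\begin{equation*}
\overline{\Phi_0(z,0)}=\Phi_0(0,\bar z),
\end{equation*}
while $\Phi_0(0,0)\in\mathbb R$. Therefore the pluriharmonic correction $\Phi_0(z,0)+\Phi_0(0,\bar z)-\Phi_0(0,0)$ is real, and so is $\Phi$.

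There is really no hard step: the only point worth any care is interpreting $\Phi_0(z,0)$ and $\Phi_0(0,\bar z)$ when $\Phi_0$ is merely smooth, but this is resolved by working with the formal Taylor series in $(z,\bar z)$ about $0$, which is the natural setting for Karabegov's deformation quantization with separation of variables. Thus the lemma is, in essence, a gauge-fixing statement and the entire argument reduces to the elementary observation that a Kähler potential can be shifted by $f(z)+\overline{f(z)}$.
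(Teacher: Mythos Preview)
Your proof is correct and follows essentially the same approach as the paper: subtract the holomorphic restriction $\Phi_0(z,0)$ and the anti-holomorphic restriction $\Phi_0(0,\bar z)$ from an arbitrary K\"ahler potential. You are in fact slightly more careful than the paper, which omits the constant $+\Phi_0(0,0)$ (needed so that the restrictions vanish exactly rather than equal $-\Phi_0(0,0)$) and does not check reality of the corrected potential.
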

This is easily shown as follow.
If a K\"ahler potential $\Phi$ satisfying 
$g_{i\bar{j}} = \partial_i \partial_{\bar j} \Phi$
does not satisfy (\ref{Phi_cond}),
then we redefine a new K\"ahler potential $\Phi'$ as
\begin{align}
&\Phi'(z^1,\dots ,z^N , \bar{z}^1 , \dots , \bar{z}^N) \notag \\
& := 
\Phi (z^1,\dots ,z^N , \bar{z}^1 , \dots , \bar{z}^N)
-\Phi( 0,\dots ,0 , \bar{z}^1 , \dots , \bar{z}^N) 
- \Phi (z^1,\dots ,z^N , 0 , \dots , 0) .
\end{align}
$\Phi (z^1,\dots ,z^N , 0 , \dots , 0) $ is a holomorphic function and
$\Phi( 0,\dots ,0 , \bar{z}^1 , \dots , \bar{z}^N) $ is an
anti-holomorphic function.
K\"ahler potentials have ambiguities of adding holomorphic and
anti-holomorphic functions.
This $\Phi'$ satisfies the condition (\ref{Phi_cond}).
In the following, we abbreviate 
$\Phi (z^1,\dots ,z^N , \bar{z}^1 , \dots , \bar{z}^N)$ to 
$\Phi(z,\bar{z})$ for convenience.

In \cite{Sako:2012ws}, it is shown that $e^{-\Phi/\hbar}$ corresponds to
a vacuum projection operator $ |\vec{0} \rangle \langle \vec{0}| $ for
the noncommutative ${\mathbb C}P^N$.  We extend this statement for
general K\"ahler manifolds.
\begin{prop}
Let $(M, \omega)$ be a K\"ahler manifold, $\Phi$ be its K\"ahler
 potential with the property (\ref{Phi_cond}), and
$*$ be a star product with separation of variables given
in the previous section.
Then the following function
\begin{equation}
 |\vec{0} \rangle \langle \vec{0}| := e^{-\Phi/\hbar},
\end{equation}
satisfies
\begin{align}
 & \underline{a}_i * |\vec{0} \rangle \langle \vec{0}| 
 = 0, \qquad
 |\vec{0} \rangle \langle \vec{0}| * a_i^\dagger 
= 0, \\
 & \left(|\vec{0} \rangle \langle \vec{0}|\right) * 
 \left(|\vec{0} \rangle \langle \vec{0}|\right) =
 e^{-\Phi/\hbar}*e^{-\Phi/\hbar} = e^{-\Phi/\hbar}
 = |\vec{0} \rangle \langle \vec{0}|.
\end{align}
\end{prop}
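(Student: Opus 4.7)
The proposition splits into three assertions, which I would treat in sequence using (\ref{LdPhi})/(\ref{RdPhi}), associativity, and Berezin's normalization $\Phi(z,0) = \Phi(0,\bar{z}) = 0$ supplied by Lemma \ref{lem-1}. Throughout I write $F := e^{-\Phi/\hbar}$.

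The identity $\underline{a}_i * F = 0$ is immediate from (\ref{LdPhi}):
\[
 \tfrac{1}{\hbar}\partial_i\Phi * F
 \;=\; e^{-\Phi/\hbar}\,\partial_i\!\bigl(e^{\Phi/\hbar}\cdot e^{-\Phi/\hbar}\bigr)
 \;=\; 0,
\]
and the conjugate identity $F * \tfrac{1}{\hbar}\partial_{\bar i}\Phi = 0$ follows in the same way from (\ref{RdPhi}); I will reuse the latter in the third step.

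For $F * a_i^\dagger = F * z^i = 0$, set $\Psi := F * z^i$. Associativity with the first identity gives $\partial_j\Phi * \Psi = (\partial_j\Phi * F) * z^i = 0$, and unpacking $L_{\partial_j\Phi} = \hbar\partial_j + \partial_j\Phi$ yields $\partial_j(e^{\Phi/\hbar}\Psi) = 0$, so $\Psi = F\,h(\bar{z})$ for some anti-holomorphic $h$. I then kill $h$ by evaluating at $z=0$. A direct consequence of $\Phi(0,\bar{z})=0$ is that every pure $\bar{z}$-derivative of $F$ vanishes at $z=0$ (by Fa\`a di Bruno, every term in the expansion of $\partial_{\bar z}^{\alpha}F$ contains a factor $\partial_{\bar{z}}^{\beta}\Phi|_{z=0}=0$); feeding this into Karabegov's formula (\ref{Lf_Lz}) for $L_F$ collapses it to the $\alpha=0$ term, giving $(L_F z^i)|_{z=0} = F(0,\bar{z})\cdot z^i|_{z=0} = 0$. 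Since $\Psi|_{z=0} = F(0,\bar{z})\,h(\bar{z}) = h(\bar{z})$, we conclude $h\equiv 0$.

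Finally, for the projector identity, set $\phi := F * F$. The first identity plus associativity yields $\partial_j\Phi * \phi = 0$ and hence $\phi = F\,g(\bar{z})$; its complex-conjugate counterpart $F*\partial_{\bar j}\Phi=0$ gives $\phi = F\,\tilde h(z)$. Comparing the two representations forces $\phi = c\,F$ with $c\in\mathbb{C}$ a constant. The same Karabegov-formula argument as in the previous step gives $\phi|_{z=0} = (L_F F)|_{z=0} = F(0,\bar{z}) = 1$, and since $F(0,\bar{z})=1$ this pins down $c=1$. The main obstacle throughout is the evaluation of $L_F$ at $z=0$: this is where one uses (\ref{Lf_Lz}) together with Berezin's normalization to reduce a potentially intricate differential operator to the simple multiplication identity, and it is the only step that genuinely depends on the choice of K\"ahler potential furnished by Lemma \ref{lem-1}.
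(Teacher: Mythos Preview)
Your proof is correct, but it follows a genuinely different path from the paper's argument. The paper introduces the normal-ordered exponential
\[
 :e^{-\sum_i a_i^\dagger \underline{a}_i}: \;=\;
 {\prod_i}_* \sum_{n=0}^\infty \frac{(-1)^n}{n!}
 (a_i^\dagger)_*^n * (\underline{a}_i)_*^n ,
\]
proves the annihilation and projector identities for this object purely from the Heisenberg-type commutation relations (\ref{comm-rel-1+1}) (the standard harmonic-oscillator computation), and only then identifies it with $e^{-\Phi/\hbar}$ via a Taylor expansion using (\ref{LdPhi}) and the normalization $\Phi(0,\bar z)=0$. In contrast, you work directly with $F=e^{-\Phi/\hbar}$: the first identity drops out of (\ref{LdPhi}) in one line, and the remaining two are obtained by combining the resulting PDE $\partial_j(e^{\Phi/\hbar}\cdot\,) = 0$ with an evaluation at $z=0$ through Karabegov's expansion (\ref{Lf_Lz}), exploiting that $(\partial_{\bar z}^\alpha F)|_{z=0}=0$ for $|\alpha|\ge 1$. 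Your route avoids the auxiliary normal-ordered object entirely and is arguably more economical; the paper's route, on the other hand, makes the Fock-space interpretation of $e^{-\Phi/\hbar}$ as the usual vacuum projector $:e^{-a^\dagger a}:$ explicit, which is conceptually useful for the rest of Section~\ref{sect3}. The essential use of Lemma~\ref{lem-1} appears at the same point in both proofs, namely when one finally pins down the undetermined anti-holomorphic (or constant) piece.
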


\begin{proof}	
We define the following normal ordered quantity,
\begin{align}
: e^{- \sum_i a_i^{\dagger} \underline{a}_i} : ~ 
 := {\prod_{i=1}^N}_* \sum_{n=0}^{\infty} 
 \frac{(-1)^n}{n!} (a_i^{\dagger})_*^n * (\underline{a}_i)^n_*.
\end{align}
Here ${\prod_{i=1}^N}_* $ is defined by
${\prod_{i=1}^N}_* f_i = f_1 * f_2 * \cdots * f_N$.
Note that, if $i \neq j$, $\underline{a}_i$ commutes with 
$\underline{a}_j , a_j^{\dagger}$, and $a_i^{\dagger}$ commutes with 
${\underline{a}}_j , a_j^{\dagger}$.
Therefore $: e^{- \sum_i a_i^{\dagger} \underline{a}_i} :$ 
does not depend on the order of each factor
$\sum_{n=0}^{\infty} \frac{(-1)^n}{n!} 
(a_i^{\dagger})_*^n * (\underline{a}_i)^n_*$.

It is easy to show that
$ \underline{a}_i * \sum_{n=0}^{\infty} \frac{(-1)^n}{n!} 
(a_i^{\dagger})_*^n * (\underline{a}_i)^n_* = 0$,
in the same way as in the case of the ordinary harmonic oscillator,
\begin{align}
\underline{a}_i * \sum_{n=0}^{\infty} \frac{(-1)^n}{n!} 
(a_i^{\dagger})_*^n * (\underline{a}_i)^n_* 
& =  \sum_{n=0}^{\infty} \frac{(-1)^n}{n!}  
 \left[ n (a_i^{\dagger})_*^{n-1} (\underline{a}_i)_*^n 
 + (a^{\dagger})_*^n (\underline{a})_*^{n+1}
 \right]
 = 0,
\end{align}
where the commutation relations (\ref{comm-rel-1+1}) are used.
Similarly, we can show 
$ \sum_{n=0}^{\infty} \frac{(-1)^n}{n!} 
(a_i^{\dagger})_*^n * (\underline{a}_i)^n_* * a_i^\dagger = 0 $. 
These results and the fact that $\underline{a}_i$ and $a_i^{\dagger}$
commute with $\underline{a}_j$ and $a_j^{\dagger}$ for $ i\neq j$ lead to 
$ \underline{a}_i * : e^{-  \sum_i a_i^{\dagger} \underline{a}_i} : = 0$
 and  
 $ : e^{- \sum_i a_i^{\dagger} \underline{a}_i} : * a_i^\dagger = 0 $. 
 Further, these relations imply 
$: e^{- \sum_i a_i^{\dagger} \underline{a}_i} : * 
: e^{- \sum_i a_i^{\dagger} \underline{a}_i} :
 ~=~ : e^{- \sum_i a_i^{\dagger} \underline{a}_i} :$.

Therefore, all we have to do is to show
\begin{align}
: e^{-  \sum_i a_i^{\dagger} \underline{a}_i} : = e^{-\Phi/\hbar}.
\end{align}
This can be done as follows:
\begin{align}
: e^{- \sum_i a_i^{\dagger} \underline{a}_i} : 
 &= \sum_{\vec{n}} \frac{(-1)^{|n|}}{\vec{n}!} 
 (a^\dagger)_*^{\vec{n}} * (\underline{a})_*^{\vec{n}} \nonumber \\
 &= \sum_{\vec{n}} \frac{(-1)^{|n|}}{\vec{n}! \hbar^{|n|}} 
 (z)_*^{\vec{n}} * (\partial \Phi)_*^{\vec{n}}.
\label{3_16}
\end{align}
In this paper, we use the following notation: for an $N$-tuple 
$A_i ~(i=1, 2, \cdots, N)$ and an $N$-vector 
$\vec{n}=(n_1, n_2, \cdots, n_N)$,
\begin{align}
(A)_*^{\vec{n}} &= (A_1)_*^{n_1} * (A_2)_*^{n_2} * \cdots *
 (A_N)_*^{n_N}, \\ 
 \vec{n}! &= n_1! n_2! \cdots n_N!, \qquad
  |n| = \sum_{i=1}^N n_i.
\end{align}
By using $(z)^{\vec{n}}_* = (z)^{\vec{n}} = (z^1)^{n_1} \cdots (z^N)^{n_N}$ and 
(\ref{RdPhi}), (\ref{3_16}) is recast as
\begin{align}
& \sum_{n_1, n_2, \dots, n_N=0}^\infty 
\frac{1}{n_1 ! n_2 ! \cdots n_N! } 
(-z^1)^{n_1} \cdots (-z^N)^{n_N} e^{-\frac{\Phi (z, \bar{z})}{\hbar}}
 \partial_1^{n_1}  \cdots \partial_N^{n_N} 
e^{\frac{\Phi (z, \bar{z})}{\hbar}} \notag \\
&= e^{-\frac{\Phi (z, \bar{z})}{\hbar}} e^{\frac{\Phi (0 , \bar{z})}{\hbar}}
\notag \\
&=  e^{-\frac{\Phi (z, \bar{z})}{\hbar}} . \label{3_17}
\end{align}
Here, the final equality follows from the condition (\ref{Phi_cond}).

\end{proof}

From a similar calculation to the above proof, we can also show the
following relations with respect to $a_i$ and $\underline{a}_i^\dagger$, 
\begin{align}
 & |\vec{0} \rangle \langle \vec{0}| 
 = e^{-\Phi/\hbar} 
 = : e^{-\sum_i \underline{a}_i^\dagger a_i} :
 = {\prod_{i=1}^N}_* \sum_{n=0}^{\infty} 
 \frac{(-1)^n}{n!} (\underline{a}_i^{\dagger})_*^n * (a_i)^n_*, 
 \label{vac-proj1} \\
 & a_i * |\vec{0} \rangle \langle \vec{0}| = 0, \qquad
 |\vec{0} \rangle \langle \vec{0}| * \underline{a}_i^\dagger = 0.
 \label{vac-proj2}
\end{align}

\begin{lem}[Sako, Suzuki, Umetsu \cite{Sako:2012ws}] \label{ssu}
$e^{-\Phi/\hbar}=|0 \rangle \langle  0 |$ satisfies the relation 
\begin{align}
|0 \rangle \langle  0 |* f(z, \bar{z}) 
 &= e^{-\Phi/\hbar} * f(z, \bar{z}) 
 = e^{-\Phi/\hbar} f(0, \bar{z}) 
 = |0 \rangle \langle  0 | f(0, \bar{z}),
 \label{vacuum1} \\
 f(z, \bar{z})*|0 \rangle \langle  0 |
 &= f(z, \bar{z}) * e^{-\Phi/\hbar} 
 = f(z, 0) e^{-\Phi/\hbar}
 = f(z, 0) |0 \rangle \langle  0 |.
  \label{vacuum2}
\end{align}
for a function $f(z, \bar{z})$ such that $f(z, \bar{w})$ can be expanded
as Taylor series with respect to $z^i$ and $\bar{w}^j$, respectively. 
\end{lem}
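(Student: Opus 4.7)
The plan is to reduce both identities to three ingredients already at hand: the separation-of-variables axioms $a*f=af$ for holomorphic $a$ and $f*b=fb$ for anti-holomorphic $b$ (equations (\ref{l_holo}) and (\ref{r_antiholo})); the vacuum annihilation relations $|\vec{0}\rangle\langle\vec{0}|*a_i^{\dagger}=0$ and $a_i*|\vec{0}\rangle\langle\vec{0}|=0$ established in the preceding proposition and in (\ref{vac-proj2}); and the associativity of $*$.

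For the first identity I would Taylor expand $f$ in $z$ around $z=0$ as
\[
f(z,\bar z)=\sum_{\alpha}\frac{1}{\alpha!}\,z^{\alpha}\,(\partial_{z}^{\alpha}f)(0,\bar z).
\]
Each $z^i$ is holomorphic, so iterating (\ref{l_holo}) yields $z^{\alpha}=(a^{\dagger})^{\alpha}_{*}$ (i.e.\ pointwise and $*$-monomials coincide), while each coefficient $(\partial_{z}^{\alpha}f)(0,\bar z)$ is anti-holomorphic, so by (\ref{r_antiholo}) it factors out of the star product when placed on the right. Then, by associativity and $|\vec{0}\rangle\langle\vec{0}|*a_i^{\dagger}=0$, every contribution with $|\alpha|\geq 1$ collapses to $0$, leaving only the $\alpha=0$ term, namely $|\vec{0}\rangle\langle\vec{0}|*f(0,\bar z)=e^{-\Phi/\hbar}f(0,\bar z)$.

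The second identity is the mirror image. I would Taylor-expand $f$ in $\bar z$ around $\bar z=0$, identify $\bar z^{\beta}=(a)^{\beta}_{*}$ via (\ref{r_antiholo}), pull the holomorphic coefficients $(\partial_{\bar z}^{\beta}f)(z,0)$ out of the left of the star product via (\ref{l_holo}), and annihilate all $\beta\neq 0$ terms using $a_i*|\vec{0}\rangle\langle\vec{0}|=0$ from (\ref{vac-proj2}).

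The only point I expect to require attention is justifying the termwise interchange of the infinite Taylor sum with $*$, but this is benign: the hypothesis on $f$ guarantees convergence of the expansions in $z$ and $\bar z$, and because $*$ is defined order-by-order in $\hbar$ through bidifferential operators, its action commutes with these term-by-term rearrangements. Once this is noted the argument is essentially a one-line application of separation of variables combined with the vacuum annihilation relations.
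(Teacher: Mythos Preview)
Your proof is correct. The paper's own argument is close in spirit but routes through different machinery: it writes $e^{-\Phi/\hbar}*f=R_f e^{-\Phi/\hbar}$ and then invokes Karabegov's expansion (\ref{Rf_Rz}), $R_f=\sum_\alpha\frac{1}{\alpha!}(\partial_z)^\alpha f\,(R_z-z)^\alpha$, together with $R_{z^i}e^{-\Phi/\hbar}=0$ (which is exactly $|\vec{0}\rangle\langle\vec{0}|*a_i^\dagger=0$) to replace each $(R_{z^i}-z^i)$ by $-z^i$; the resulting sum $\sum_\alpha\frac{(-z)^\alpha}{\alpha!}(\partial_z^\alpha f)(z,\bar z)\,e^{-\Phi/\hbar}$ is recognized as the Taylor shift $f(0,\bar z)e^{-\Phi/\hbar}$. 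Your approach instead Taylor-expands $f$ in $z$ about $0$ up front, turns each term $z^\alpha g_\alpha(\bar z)$ into a star product $z^\alpha*g_\alpha(\bar z)$ using only the separation-of-variables axioms (\ref{l_holo})--(\ref{r_antiholo}), and kills the $|\alpha|\geq 1$ terms directly via $|\vec{0}\rangle\langle\vec{0}|*a_i^\dagger=0$. The endpoint is the same, but your argument is more elementary in that it avoids the Karabegov theorem (\ref{Rf_Rz}) entirely; the paper's version, on the other hand, does not need the preliminary conversion of the pointwise Taylor series into a $*$-series and so sidesteps your discussion of interchanging the infinite sum with $*$.
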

This proof is given in \cite{Sako:2012ws}, but for the convenience 
its proof is reviewed here. 
\begin{proof}
To show the relation (\ref{vacuum1}), we note that the differential
operator $R_{z^i}$ corresponding to the right product of $z^i$ contains
only partial derivatives by $\bar{z}^j$, and thus commutes with
 $z^k$. Moreover, $R_{z^i}$ annihilates $e^{-\Phi/\hbar}$,
 $R_{z^i} e^{-\Phi/\hbar} = e^{-\Phi/\hbar} * z^i
 = |\vec{0}\rangle \langle \vec{0} |* a_i^\dagger=0 $ .
 From these and (\ref{Rf_Rz}), the relation (\ref{vacuum1})
is shown as
\begin{align}
 e^{-\Phi/\hbar} * f(z, \bar{z}) &= R_f e^{-\Phi/\hbar} \nonumber \\
 &= \sum_{k_1, \dots, k_N=0}^\infty \frac{1}{k_1! \cdots k_N!}
 \partial_1^{k_1} \cdots \partial_N^{k_N} f(z, \bar{z})
 \prod_{i=1}^{N} \left(R_{z^i} - z^i \right)^{k_i} e^{-\Phi/\hbar}
 \nonumber \\
 &= \sum_{k_1, \dots, k_N=0}^\infty \frac{1}{k_1! \cdots k_N!}
 \partial_1^{k_1} \cdots \partial_N^{k_N} f(z, \bar{z})
 \prod_{i=1}^{N} \left(- z^i \right)^{k_i} e^{-\Phi/\hbar}
 \nonumber \\
 &= e^{-\Phi/\hbar} f(0, \bar{z}). \label{e-f}
\end{align}
Similarly, (\ref{vacuum2}) follows from (\ref{Lf_Lz}) and
 (\ref{vac-proj2}).

\end{proof}

%
%

We expand a function 
$\exp \Phi(z, \bar{z})/\hbar$ as a power series,
\begin{align}
 e^{\Phi(z, \bar{z})/\hbar}
 = \sum_{\vec{m},\vec{n}} H_{\vec{m}, \vec{n}}
 (z)^{\vec{m}} (\bar{z})^{\vec{n}}
 \label{H}
\end{align}
where $(z)^{\vec{n}} = (z^1)^{n_1} \cdots (z^N)^{n_N}$ and $(\bar{z})^{\vec{n}} = (\bar{z}^1)^{n_1} \cdots (\bar{z}^N)^{n_N}$.
Since $\exp \Phi/\hbar$ is real and satisfies (\ref{Phi_cond}),
the expansion coefficients $H_{\vec{m},\vec{n}}$ obey   
\begin{align}
 \bar{H}_{\vec{m},\vec{n}} &= H_{\vec{n}, \vec{m}}, \\
 H_{\vec{0}, \vec{n}} &= H_{\vec{n}, \vec{0}}
 = \delta_{\vec{n}, \vec{0}}.
\end{align}

Using this expansion, the following relations are obtained. 
\begin{prop}
 The right $*$-multiplication of
 $ (\underline{a})_*^{\vec{n}}=(\partial \Phi /\hbar )_*^{\vec{n}} $
 on $|\vec{0} \rangle \langle \vec{0} |$ is
 related to the right $*$-multiplication of
 $(a)_*^{\vec{n}} = (\bar{z})_*^{\vec{n}}$ on
 $|\vec{0} \rangle \langle \vec{0}| $ as follows,
\begin{align}
 |\vec{0} \rangle \langle  \vec{0} | * (\underline{a})_*^{\vec{n}}
 &= |\vec{0} \rangle \langle  \vec{0} |
 * \left(\frac{1}{\hbar}\partial \Phi \right)_*^{\vec{n}} \nonumber \\ 
 &= \vec{n}! \sum_{\vec{m}} H_{\vec{n}, \vec{m}}
 |\vec{0} \rangle \langle  \vec{0} | * (\bar{z})_*^{\vec{m}}
 = \vec{n}! \sum_{\vec{m}} H_{\vec{n}, \vec{m}}
 |\vec{0} \rangle \langle  \vec{0} | * (a)_*^{\vec{m}}.
 \label{vac-dPhi}
\end{align}
 Similarly, the following relation holds,
 \begin{align}
  (\underline{a}^\dagger)_*^{\vec{n}}
  * |\vec{0} \rangle \langle  \vec{0} | 
  &= \left(\frac{1}{\hbar}\bar{\partial} \Phi \right)_*^{\vec{n}}
  * |\vec{0} \rangle \langle  \vec{0} |
  \nonumber \\
  &= \vec{n}! \sum_{\vec{m}} H_{\vec{m}, \vec{n}} (z)^{\vec{m}}
  * |\vec{0} \rangle \langle  \vec{0} |
  = \vec{n}! \sum_{\vec{m}} H_{\vec{m}, \vec{n}}
  (a^\dagger)_*^{\vec{m}}
  * |\vec{0} \rangle \langle  \vec{0} |.
  \label{dbarPhi-vac}
 \end{align}
\end{prop}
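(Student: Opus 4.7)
The plan is to reduce both identities to Lemma \ref{ssu} by first expressing $(\underline{a})_*^{\vec n}$ as a concrete function of $(z,\bar z)$, and then extracting its value at $z=0$ via the expansion (\ref{H}).

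First I would show that
\begin{equation}
(\underline{a})_*^{\vec n}
 \;=\; \left(\tfrac{1}{\hbar}\partial\Phi\right)_*^{\vec n}
 \;=\; e^{-\Phi/\hbar}\,\partial^{\vec n}\, e^{\Phi/\hbar},
\end{equation}
where $\partial^{\vec n}=\partial_1^{n_1}\cdots\partial_N^{n_N}$. From (\ref{LdPhi}) the left-multiplication operator is $L_{\underline{a}_i}=\partial_i+\partial_i\Phi/\hbar=e^{-\Phi/\hbar}\partial_i e^{\Phi/\hbar}$; since $[\underline{a}_i,\underline{a}_j]_*=0$ these operators mutually commute, and therefore
\begin{equation*}
(\underline{a})_*^{\vec n}
 = L_{\underline{a}_1}^{n_1}\cdots L_{\underline{a}_N}^{n_N}(1)
 = (e^{-\Phi/\hbar}\partial_1 e^{\Phi/\hbar})^{n_1}\cdots(e^{-\Phi/\hbar}\partial_N e^{\Phi/\hbar})^{n_N}(1)
 = e^{-\Phi/\hbar}\,\partial^{\vec n}\, e^{\Phi/\hbar},
\end{equation*}
where in each factor the flanking exponentials telescope.

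Next I would apply Lemma \ref{ssu} with $f=(\underline{a})_*^{\vec n}$. Since $\Phi(0,\bar z)=0$ by Lemma \ref{lem-1}, the prefactor $e^{-\Phi/\hbar}$ evaluates to $1$ at $z=0$, so
\begin{equation*}
|\vec{0}\rangle\langle\vec{0}|*(\underline{a})_*^{\vec n}
 = e^{-\Phi/\hbar}\,\bigl[(\underline{a})_*^{\vec n}\bigr]_{z=0}
 = e^{-\Phi/\hbar}\,\bigl[\partial^{\vec n}e^{\Phi/\hbar}\bigr]_{z=0}.
\end{equation*}
Plugging in the expansion (\ref{H}) and observing that $\partial^{\vec n}z^{\vec m}\big|_{z=0}=\vec n!\,\delta_{\vec m,\vec n}$, only the $\vec m=\vec n$ term survives, giving
\begin{equation*}
|\vec{0}\rangle\langle\vec{0}|*(\underline{a})_*^{\vec n}
 = \vec n!\sum_{\vec m} H_{\vec n,\vec m}\, e^{-\Phi/\hbar}\,\bar z^{\vec m}.
\end{equation*}
Finally, a second application of Lemma \ref{ssu} with the antiholomorphic function $(a)_*^{\vec m}=\bar z^{\vec m}$ rewrites $e^{-\Phi/\hbar}\bar z^{\vec m}=|\vec{0}\rangle\langle\vec{0}|*(a)_*^{\vec m}$, yielding (\ref{vac-dPhi}).

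For the companion identity (\ref{dbarPhi-vac}), I would simply take the complex conjugate of (\ref{vac-dPhi}) and invoke Proposition \ref{comp-conj} together with $\overline{H_{\vec n,\vec m}}=H_{\vec m,\vec n}$ and $\overline{|\vec 0\rangle\langle\vec 0|}=|\vec 0\rangle\langle\vec 0|$; complex conjugation swaps $a_i\leftrightarrow a_i^\dagger$ and $\underline{a}_i\leftrightarrow \underline{a}_i^\dagger$ and reverses the order of $*$-factors, which converts the first relation into the second. The only step requiring care is the telescoping argument that identifies $(\underline{a})_*^{\vec n}$ with $e^{-\Phi/\hbar}\partial^{\vec n}e^{\Phi/\hbar}$: one must justify that the star-polynomial built from the non-commuting-looking factors $\underline{a}_i=\partial_i\Phi/\hbar$ collapses to this clean form, which hinges on the commutativity $[\underline a_i,\underline a_j]_*=0$ from (\ref{comm-rel-1+1}) and the similarity-transform form of $L_{\underline{a}_i}$ in (\ref{LdPhi}); this is really the only place where a non-routine identification happens, and everything else is a short computation with the lemma.
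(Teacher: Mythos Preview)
Your proof is correct and follows essentially the same route as the paper: both use (\ref{LdPhi}) to identify $(\underline{a})_*^{\vec n}$ with $e^{-\Phi/\hbar}\partial^{\vec n}e^{\Phi/\hbar}$, then invoke Lemma~\ref{ssu} to evaluate at $z=0$, read off the coefficients from the expansion (\ref{H}), and obtain (\ref{dbarPhi-vac}) by complex conjugation. The only cosmetic difference is that you isolate the telescoping identity $(\underline{a})_*^{\vec n}=e^{-\Phi/\hbar}\partial^{\vec n}e^{\Phi/\hbar}$ as a separate preliminary step, whereas the paper folds it directly into the computation.
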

\begin{proof}
By using (\ref{LdPhi}),
\begin{align}
 |\vec{0} \rangle \langle \vec{0} |
  * \left(\frac{1}{\hbar}\partial \Phi \right)^{\vec{n}}_*  
&= |\vec{0} \rangle \langle  \vec{0} | * 
\left( e^{-\Phi /\hbar } (\partial)^{\vec{n}} e^{\Phi /\hbar } \right)
\notag \\
&= |\vec{0} \rangle \langle \vec{0} | * 
 \left(e^{-\Phi /\hbar } \sum_{\vec{m}} \vec{n}!
 H_{\vec{n}, \vec{m}} (\bar{z})^{\vec{m}}
 + O(z)\right) .
\end{align}
From (\ref{vacuum1}) and Lemma \ref{lem-1}, this is rewritten as
\begin{align}
\left.
 |\vec{0} \rangle \langle \vec{0} | * 
 \left(e^{-\Phi /\hbar } \sum_{\vec{m}} \vec{n}!
 H_{\vec{n}, \vec{m}} (\bar{z})^{\vec{m}}\right) 
\right|_{(z,\bar{z})=(0, \bar{z})}
 &= |\vec{0} \rangle \langle \vec{0} | *
 \sum_{\vec{m}} \vec{n}! H_{\vec{n}, \vec{m}} (\bar{z})^{\vec{m}}
 \nonumber \\
 &= |\vec{0} \rangle \langle \vec{0} | *
 \sum_{\vec{m}} \vec{n}! H_{\vec{n}, \vec{m}} (a)_*^{\vec{m}}.
\end{align}
 The relation (\ref{dbarPhi-vac}) is the complex conjugation of
 (\ref{vac-dPhi}). 
\end{proof}

If there exists the inverse matrix $H^{-1}_{\vec{m}, \vec{n}}$, then the
following relations also holds,
\begin{cor}
\begin{align}
 |\vec{0} \rangle \langle \vec{0} | *  (a)_*^{\vec{n}} 
&= \sum_{\vec{m}} \frac{1}{\vec{m}!} H^{-1}_{\vec{n}, \vec{m}}
 |\vec{0} \rangle \langle \vec{0} | * (\underline{a})_*^{\vec{m}}, \\
(a^\dagger)_*^{\vec{n}} *  |\vec{0} \rangle \langle \vec{0} | 
&= \sum_{\vec{m}} \frac{1}{\vec{m}!} H^{-1}_{\vec{m}, \vec{n}}
 (\underline{a}^\dagger)^{\vec{m}} *
 |\vec{0} \rangle \langle \vec{0} |, 
\end{align}
where $H^{-1}_{\vec{n},\vec{m}}$ 
is the inverse matrix of the matrix $H_{\vec{n}, \vec{m}}$,
 $\sum_{\vec{k}} H_{\vec{m}, \vec{k}} H^{-1}_{\vec{k}, \vec{n}}
 = \delta_{\vec{m}, \vec{n}}$.
\end{cor}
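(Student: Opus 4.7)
The plan is to view equation (\ref{vac-dPhi}) as an infinite linear system indexed by multi-indices and invert it using the matrix $H^{-1}$. Rewriting (\ref{vac-dPhi}) as
\[
\frac{1}{\vec{n}!}\,|\vec{0}\rangle\langle\vec{0}|*(\underline{a})_*^{\vec{n}} \;=\; \sum_{\vec{m}} H_{\vec{n},\vec{m}}\,|\vec{0}\rangle\langle\vec{0}|*(a)_*^{\vec{m}},
\]
I would contract both sides with $H^{-1}_{\vec{k},\vec{n}}$ and sum over $\vec{n}$. Exchanging the order of summation and using the defining identity $\sum_{\vec{n}} H^{-1}_{\vec{k},\vec{n}} H_{\vec{n},\vec{m}} = \delta_{\vec{k},\vec{m}}$ collapses the right-hand side to $|\vec{0}\rangle\langle\vec{0}|*(a)_*^{\vec{k}}$, which (after relabeling $\vec{k}\to\vec{n}$) is exactly the first claim.

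For the second identity, the quickest route is to repeat the same argument starting from equation (\ref{dbarPhi-vac}). It has the analogous structure but with $H_{\vec{m},\vec{n}}$ in place of $H_{\vec{n},\vec{m}}$, so contracting with $H^{-1}_{\vec{m},\vec{k}}$ and summing produces $H^{-1}_{\vec{m},\vec{n}}$ (with the indices in the order stated). Alternatively, one may apply Proposition \ref{comp-conj} to take the complex conjugate of the first relation, using $\overline{H_{\vec{m},\vec{n}}}=H_{\vec{n},\vec{m}}$ (which implies $\overline{H^{-1}_{\vec{m},\vec{n}}}=H^{-1}_{\vec{n},\vec{m}}$) and the reality of $\Phi$, which gives $\overline{|\vec{0}\rangle\langle\vec{0}|}=|\vec{0}\rangle\langle\vec{0}|$.

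The only potential subtlety lies in the legitimacy of the matrix inversion over an infinite index set and the interchange of the two sums. Since the existence of $H^{-1}$ is imposed as the hypothesis of the corollary, and since every identity here can be read as an equality of formal power series in $z^i$ and $\bar{z}^i$ whose coefficient at any fixed multi-degree involves only finitely many surviving terms, this bookkeeping is routine. I therefore do not anticipate any substantive obstacle beyond clean index management.
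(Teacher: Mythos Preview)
Your proposal is correct and is exactly the argument the paper has in mind: the corollary is stated immediately after the proposition containing (\ref{vac-dPhi}) and (\ref{dbarPhi-vac}) with no separate proof, the intended derivation being precisely the matrix inversion you carry out. Your remarks on the formal-power-series bookkeeping and the complex-conjugation alternative for the second identity are accurate and in line with how the paper handles these manipulations elsewhere.
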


We introduce bases of the Fock representation as follows,
\begin{equation}
 |\vec{m} \rangle \underline{\langle \vec{n}|}
 := \frac{1}{\sqrt{\vec{m}! \vec{n}!}}
 (a^\dagger)_*^{\vec{m}} * |\vec{0} \rangle \langle \vec{0}|
 * (\underline{a})_*^{\vec{n}}
 = \frac{1}{\sqrt{\vec{m}! \vec{n}!}}
 (z)_*^{\vec{m}} * e^{-\Phi/\hbar}
 * \left(\frac{1}{\hbar} \partial \Phi\right)_*^{\vec{n}}.
\end{equation}
By using (\ref{vac-dPhi}), the bases are also written as
\begin{align}
 |\vec{m} \rangle \underline{\langle \vec{n}|}
 &= \sqrt{\frac{\vec{n}!}{\vec{m}!}} 
 \sum_{\vec{k}} H_{\vec{n}, \vec{k}} 
 (z)_*^{\vec{m}} * e^{-\Phi/\hbar} * (\bar{z})_*^{\vec{k}} 
 \nonumber \\
 &= \sqrt{\frac{\vec{n}!}{\vec{m}!}} 
 \sum_{\vec{k}} H_{\vec{n}, \vec{k}} 
 (z)^{\vec{m}} (\bar{z})^{\vec{k}} e^{-\Phi/\hbar}. \label{base_H_ver}
\end{align} 
The completeness of the bases are formally shown as
\begin{align}
 \sum_{\vec{n}} |\vec{n} \rangle \underline{\langle \vec{n}|}
 &= \sum_{\vec{m}, \vec{n}} H_{\vec{n}, \vec{m}} 
 (z)^{\vec{n}} (\bar{z})^{\vec{m}} e^{-\Phi/\hbar} \nonumber \\
 &= e^{\Phi/\hbar} e^{-\Phi/\hbar} \nonumber \\
 &= 1.
\end{align}
The $*$-products between the bases are calculated as
\begin{align}
 |\vec{m} \rangle \underline{\langle \vec{n}|} *
 |\vec{k} \rangle \underline{\langle \vec{l}|}
 &= \frac{1}{\sqrt{\vec{m}!\vec{n}!\vec{k}!\vec{l}!}}
 (a^\dagger)_*^{\vec{m}} * |\vec{0} \rangle \langle \vec{0}|
 * (\underline{a})_*^{\vec{n}} *
(a^\dagger)_*^{\vec{k}} * |\vec{0} \rangle \langle \vec{0}|
 * (\underline{a})_*^{\vec{l}} \nonumber \\
 &= \delta_{\vec{n}, \vec{k}}
 |\vec{m} \rangle \underline{\langle \vec{l}|}.
\end{align}
The behavior of the bases under the complex conjugation is different
from usual,
\begin{align}
 \overline{|\vec{m} \rangle \underline{\langle \vec{n}|}}
 &= \sqrt{\frac{\vec{n}!}{\vec{m}!}} \sum_{\vec{k}}
 H_{\vec{k}, \vec{n}} (z)^{\vec{k}} (\bar{z})^{\vec{m}} e^{-\Phi/\hbar}
 \nonumber \\
 &= \sqrt{\frac{\vec{n}!}{\vec{m}!}} \sum_{\vec{k}, \vec{l}}
 \sqrt{\frac{\vec{k}!}{\vec{l}!}}
 H_{\vec{k}, \vec{n}} H^{-1}_{\vec{m}, \vec{l}}~
 |\vec{k} \rangle \underline{\langle \vec{l}|}.
\end{align}

The creation and annihilation operators $a_i^\dagger, \underline{a}_i$
act on the bases as follows,
\begin{align}
 a_i^\dagger * |\vec{m} \rangle \underline{\langle \vec{n}|}
 &= \sqrt{m_i+1} 
 |\vec{m} + \vec{e}_i \rangle \underline{\langle \vec{n}|}, \\
 \underline{a}_i * |\vec{m} \rangle \underline{\langle \vec{n}|}
 &= \sqrt{m_i} 
 |\vec{m} - \vec{e}_i \rangle \underline{\langle \vec{n}|}, \\
 |\vec{m} \rangle \underline{\langle \vec{n}|} * a_i^\dagger 
 &= \sqrt{n_i} 
 |\vec{m} \rangle \underline{\langle \vec{n} - \vec{e}_i|}, \\
 |\vec{m} \rangle \underline{\langle \vec{n}|} * \underline{a}_i
 &= \sqrt{n_i+1} 
 |\vec{m} \rangle \underline{\langle \vec{n} + \vec{e}_i|}, 
\end{align}
where $\vec{e}_i$ is a unit vector, $(\vec{e}_i)_j = \delta_{ij}$.
The action of $a_i$ and $\underline{a}_i^\dagger$ is derived by the
Hermitian conjugation of the above equations.

The creation and annihilation operators can be expanded with respect to
the bases,
\begin{align}
 a_i^\dagger &= \sum_{\vec{n}} \sqrt{n_i+1} 
 |\vec{n} + \vec{e}_i \rangle \underline{\langle \vec{n}|}, 
 \label{ca-op1} \\
 \underline{a}_i &= \sum_{\vec{n}} \sqrt{n_i+1}
 |\vec{n} \rangle \underline{\langle \vec{n} + \vec{e}_i|}, 
 \label{ca-op2} \\
 a_i &= \sum_{\vec{m}, \vec{n}, \vec{k}}
 \sqrt{\frac{\vec{m}!}{\vec{n}!}} H_{\vec{m}, \vec{k}}
 H^{-1}_{\vec{k} + \vec{e}_i, \vec{n}} 
 |\vec{m} \rangle \underline{\langle \vec{n}|}, \\
 \underline{a}_i^\dagger &=
 \sqrt{\frac{\vec{m}!}{\vec{n}!}} (k_i + 1) 
 H_{\vec{m}, \vec{k}+\vec{e}_i} H^{-1}_{\vec{k}, \vec{n}} 
 |\vec{m} \rangle \underline{\langle \vec{n}|}.
\end{align}

\section{Transition maps} \label{sect4}

Let $\{ U_a \} $ with $M= \cup_a U_a$ be a locally finite open covering and
$\{( U_a , \phi_a) \}$ be an atlas , 
where $\phi_a : U_a \rightarrow {\mathbb C}^N$.
Consider the case $U_{a} \cap U_{b} \neq \emptyset$.
Denote by $\phi_{a, b}$ the transition map from
$\phi_{a}(U_{a} ) $ to $\phi_{b}(U_{b} ) $.
The local coordinates 
$(z , \bar{z})=(z^1, \cdots ,z^N , \bar{z}^1 , \cdots , \bar{z}^N)$ 
on $U_{a}$ are transformed into the coordinates 
$(w , \bar{w})= (w^1, \cdots , w^N, \bar{w}^1 , \cdots , \bar{w}^N) $
on $U_{b}$ by $(w , \bar{w})= (w(z) , \bar{w}(\bar{z}) )$, where 
$w(z)=(w^1(z), \cdots , w^N(z))$
is a holomorphic function and 
$\bar{w}(\bar{z})= (\bar{w}^1 (\bar{z}) , \cdots , \bar{w}^N(\bar{z}))$ 
is an anti-holomorphic function.
Denote by $f*_a g$ and $f*_b g$ the star products defined 
in Section \ref{sect2} on $U_{a}$ and
$U_{b}$, respectively.
In general, there is a nontrivial transition maps $T$ between
two star products i.e. $f *_b g = T(f) *_a T(g)$.
But the transition maps are trivial in our case.

\begin{prop}
For an overlap $U_{a} \cap U_{b} \neq \emptyset$,
\begin{align}
f *_b g (w,\bar{w})= \phi^{*}_{a,b}~ f *_a  g  (w,\bar{w})
= \phi^{*}_{a,b}~ f(w(z) , \bar{w}(\bar{z})) *_a g(w(z) , \bar{w}(\bar{z})).
\end{align}
Here $\phi^{*}_{a,b}$ is the pull back of $\phi_{a,b}$.
\end{prop}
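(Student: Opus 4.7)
The plan is to invoke Karabegov's uniqueness theorem for the left $*$-multiplication operator $L_f$. On the overlap $U_a \cap U_b$, define an auxiliary product $\tilde{*}$ by transporting $*_b$ along $\phi^{*}_{a,b}$; it then suffices to show $\tilde{*} = *_a$. Because the transition map is holomorphic in $z$ and anti-holomorphic in $\bar z$, holomorphic (respectively anti-holomorphic) functions pull back to holomorphic (respectively anti-holomorphic) functions, so $\tilde{*}$ automatically enjoys the separation-of-variables property. Moreover, the operator $\tilde{L}_f$ obtained by pulling back $L_f^{(b)}$ still contains only holomorphic partial derivatives $\partial_{z^j}$, because $\partial_{w^i} = (\partial z^j/\partial w^i)\partial_{z^j}$ has a holomorphic Jacobian.

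The key observation is that on the overlap the two K\"ahler potentials satisfy $\phi^{*}_{a,b}\Phi_b = \Phi_a + h(z) + \overline{h(z)}$ for some locally defined holomorphic function $h$, since both yield the same K\"ahler form $\omega$ and any real $\partial\bar\partial$-closed function on the chart is of this form. I then claim that $L_f$ depends on the K\"ahler potential only modulo such an additive ambiguity. For $\Phi' = \Phi + h(z) + \overline{h(z)}$, the Karabegov commutator splits as
\begin{equation*}
[L_f,\, \partial_{\bar i}\Phi' + \hbar \partial_{\bar i}]
= [L_f,\, \partial_{\bar i}\Phi + \hbar \partial_{\bar i}]
+ [L_f,\, \partial_{\bar i}\overline{h(z)}],
\end{equation*}
and the second term vanishes because $\partial_{\bar i}\overline{h(z)}$ is anti-holomorphic while $L_f$, being built from the $D^{\bar i}$ and containing only $z$-derivatives, commutes with any anti-holomorphic function. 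Combined with $L_f 1 = f$, the uniqueness part of Karabegov's theorem forces the $L_f$ associated with $\Phi$ and with $\Phi'$ to coincide, so the star product itself is insensitive to this K\"ahler transformation.

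Putting these ingredients together, $\tilde{L}_f$ satisfies $\tilde{L}_f 1 = f$ and commutes with $\partial_{\bar i}(\phi^{*}_{a,b}\Phi_b) + \hbar \partial_{\bar i}$, which by the previous step is equivalent to commuting with $\partial_{\bar i}\Phi_a + \hbar \partial_{\bar i}$. Karabegov's theorem then yields $\tilde{L}_f = L_f^{(a)}$, whence $\tilde{*} = *_a$ on $U_a \cap U_b$ and the proposition follows. The step I expect to require the most care is verifying that $L_f$ genuinely commutes with the anti-holomorphic function $\partial_{\bar i}\overline{h(z)}$; this is immediate granted the structural expansion (\ref{Lf-An})--(\ref{An-a}) of $L_f$ supplied by Karabegov, so the entire argument ultimately reduces to the holomorphicity property built into the star product with separation of variables.
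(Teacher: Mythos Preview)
Your proof is correct and follows essentially the same route as the paper's: both rely on (i) the K\"ahler ambiguity $\Phi_b = \Phi_a + \text{(holomorphic)} + \text{(anti-holomorphic)}$, (ii) the fact that $L_f$ contains only $z$-derivatives and hence commutes with any anti-holomorphic function, and (iii) Karabegov's uniqueness theorem. The paper carries this out by explicitly rewriting $L_{b,f}$ in terms of $D^{\bar j}_a$ via the tensorial transformation $D^{\bar i}_b = (\partial\bar w^i/\partial\bar z^l)D^{\bar l}_a$ and then computing the commutator with $\partial_{\bar z^k}\Phi_a + \hbar\partial_{\bar z^k}$ directly, whereas you package the same ingredients more conceptually by first isolating the K\"ahler-transformation invariance of $L_f$ as a separate observation and then invoking the pullback. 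One step you leave implicit, and which the paper spells out, is the passage from $[L^{(b)}_f,\partial_{\bar w^l}\Phi_b + \hbar\partial_{\bar w^l}]=0$ to $[\tilde L_f,\partial_{\bar z^k}(\phi^*_{a,b}\Phi_b) + \hbar\partial_{\bar z^k}]=0$: this uses that the anti-holomorphic Jacobian $\partial\bar z^k/\partial\bar w^l$ commutes with $\tilde L_f$ and is invertible, exactly the computation the paper displays.
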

\begin{proof}
The K\"ahler potentials $\Phi_a (z,\bar{z})$ on $U_a$ and 
$\Phi_b (w,\bar{w})$ on $U_b$ satisfy, in general,
\begin{align*}
\Phi_b (w,\bar{w}) =  \Phi_a (z,\bar{z}) + \phi(z) + \bar{\phi}(\bar{z}),
\end{align*}
where $\phi$ is a holomorphic function and $\bar{\phi}$ is 
an anti-holomorphic function. 
We define a differential operator 
$L_{b ,f}$ by $L_{b ,f} g : = f*_b g$ on $U_b$.
Similarly, we use $g^{\bar{i}j}_b$ , $ D^{\bar{i}}_b$ etc. as 
the metric on $U_b$, differential operator $ D^{\bar{i}}$ on $U_b$, etc.
As mentioned in Section \ref{sect2},
\begin{eqnarray}
L_{b, f} = \sum_{n=0}^{\infty} \hbar^n a^b_{n, \vec{\bar{i}}}(f) 
D^{\vec{\bar{i}}}_b
= \sum_{n=0}^{\infty} \hbar^n \sum_{k\geq 0} 
a^{b(n;k)}_{\bar{i}_1 \cdots \bar{i}_k}  
D^{\bar{i}_1}_b \cdots D^{\bar{i}_k}_b,
\end{eqnarray}
is determined by 
\begin{align} \label{a_b_1}
[L_{b, f}~ ,~ R_{b,\partial_l \Phi_b} ]=
[L_{b, f}~ ,~ \frac{\partial \Phi_b}{\partial {\bar{w}^l}} 
 + \hbar  \frac{\partial}{\partial {\bar{w}^l}}]=0
\end{align}
On the  overlap 
$U_{a} \cap U_{b} $,
\begin{align}
D^{\bar{i}}_b= \frac{\partial \bar{w}^i}{\partial {\bar{z}^l}} D^{\bar{l}}_a ,
\end{align}
because 
$\displaystyle g^{\bar{i}j}_b = 
\frac{\partial \bar{w}^i}{\partial {\bar{z}^k}} 
\frac{\partial {w}^j}{\partial z^l} g^{\bar{k}l}_a $.
{}From the fact that differential operators $D^{\vec{\bar{i}}}_b$
contain only differentiation with respect to holomorphic coordinates $w^i$,
$D^{\vec{\bar{i}}}_b$ commutes with anti-holomorphic functions, 
then we obtain
\begin{align}
L_{b,f} = \sum_{n=0}^{\infty} \hbar^n a^b_{n, \vec{\bar{i}}}(f) 
{\left(\frac{\partial \bar{w}}{\partial {\bar{z}}}
 \right)^{\vec{\bar{i}}}}_{\vec{\bar{j}}} D^{\vec{\bar{j}}}_b,
\end{align}
where 
$\displaystyle{\left(\frac{\partial \bar{w}}{\partial {\bar{z}}}
 \right)^{\vec{\bar{i}}}}_{\vec{\bar{j}}}$ is an anti-holomorphic function
\begin{align}
{\left(\frac{\partial \bar{w}}{\partial {\bar{z}}}
 \right)^{\vec{\bar{i}}}}_{\vec{\bar{j}}}
= \frac{\partial \bar{w}^{i_1}}{\partial {\bar{z}^{j_1}}}
\cdots \frac{\partial \bar{w}^{i_k}}{\partial {\bar{z}^{j_k}}}.
\end{align}
Here, the Einstein summation convention over repeated
indices is also used for multi indices like $ \vec{\bar{i}} $ and so on.
Then 
\begin{align}
&\left[ L_{b, f}~ ,~ \frac{\partial \Phi_b}{\partial {\bar{w}^l}} 
 + \hbar  \frac{\partial}{\partial {\bar{w}^l}} \right]
\nonumber \\
&=  \left[ \sum_{n=0}^{\infty} \hbar^n a^b_{n, \vec{\bar{i}}}(f) 
{\left(\frac{\partial \bar{w}}{\partial {\bar{z}}}
 \right)^{\vec{\bar{i}}}}_{\vec{\bar{j}}} D^{\vec{\bar{j}}}_a~ , 
 ~ \frac{\partial \bar{z}^k}{\partial {\bar{w}^l}} 
 \left( \frac{\partial \Phi_a}{\partial {\bar{z}^k}}
 + \frac{\partial \bar{\phi}}{\partial {\bar{z}^k}}
 + \hbar  \frac{\partial}{\partial {\bar{z}^k}} \right) \right]
 \nonumber \\
 &= \frac{\partial \bar{z}^k}{\partial {\bar{w}^l}}
 \left[ \sum_{n=0}^{\infty} \hbar^n a^b_{n, \vec{\bar{i}}}(f) 
 {\left(\frac{\partial \bar{w}}{\partial {\bar{z}}}
 \right)^{\vec{\bar{i}}}}_{\vec{\bar{j}}} 
 D^{\vec{\bar{j}}}_a ~ , 
 \frac{\partial \Phi_a}{\partial {\bar{z}^k}} 
 + \hbar  \frac{\partial}{\partial {\bar{z}^k}} \right]
 =0,
\end{align}
and thus we obtain 
\begin{align}
L_{a,f} = \sum_{n=0}^{\infty} \hbar^n a^b_{n, \vec{\bar{i}}}(f) 
{\left(\frac{\partial \bar{w}}{\partial {\bar{z}}}
 \right)^{\vec{\bar{i}}}}_{\vec{\bar{j}}} 
 D^{\vec{\bar{j}}}_a = L_{b,f} \label{a_b_2}
\end{align}
which satisfies the condition $[L_{a,f} , R_{a, \partial_{\bar{i}} \Phi_a}]=0$.
\end{proof}
(\ref{a_b_2}) means that
\begin{align}
 a^b_{n, \vec{\bar{j}}}(f) 
 {\left(\frac{\partial \bar{w}}{\partial {\bar{z}}}
 \right)^{\vec{\bar{j}}}}_{\vec{\bar{i}}} =
 a^a_{n, \vec{\bar{i}}}(f) ,
\end{align}
in other words, $a^b_{n, \alpha}(f)$ transforms as a tensor.

As a next step, we consider the transition function between twisted Fock
representations. 
{}From Lemma \ref{lem-1},
we can choose $\Phi_a (z,\bar{z})$ and $\Phi_b (w,\bar{w})$ such that
\begin{align}
\Phi_a (0,\bar{z})=\Phi_a (z, 0)=0, ~\   
\Phi_b (0 ,\bar{w})= \Phi_b (w, 0)=0.
\end{align}
Using these K\"ahler potentials, $|\vec{0} \rangle_p {}_p\langle \vec{0}|$
is defined as 
\begin{align*}
 |\vec{0} \rangle_p {}_p\langle \vec{0}| = e^{-\Phi_p /\hbar}, ~\ ~ 
(p=a,b ) ,
 \end{align*} 
and 
 $|\vec{m} \rangle_p \underline{ {}_p\langle \vec{n}|}$ are defined by
\begin{align*}
 |\vec{m} \rangle_a  \underline{{}_a\langle \vec{n}|}
 &= \frac{1}{\sqrt{\vec{m}! \vec{n}!}}
 (z)_*^{\vec{m}} * e^{-\Phi_a/\hbar}
 * \left(\frac{1}{\hbar} \partial \Phi_a\right)_*^{\vec{n}}, 
\notag \\
|\vec{m} \rangle_b  \underline{{}_b\langle \vec{n}|}
 &= \frac{1}{\sqrt{\vec{m}! \vec{n}!}}
 (w)_*^{\vec{m}} * e^{-\Phi_b/\hbar}
 * \left(\frac{1}{\hbar} \partial \Phi_b\right)_*^{\vec{n}}.
\end{align*}

Let us consider the case that on the overlap $U_a \cap U_b$ 
the coordinate transition function $w(z)$, $\bar{w}(\bar{z})$,
and the functions $\exp (\phi(w)/\hbar )$ and $\exp (\bar{\phi}(\bar{w})/\hbar )$
are given by analytic functions.
Then the products $(w(z))^{\vec{\alpha}} \exp -(\phi(w)/\hbar ) $
and $(\bar{w}(\bar{z}))^{\vec{\alpha}}\exp -(\bar{\phi}(\bar{w})/\hbar ) $
are also analytic functions;
\begin{align} \label{a_b_3}
(w(z))^{\vec{\alpha}} e^{-\phi(w)/\hbar}
&= \sum_{\vec{\beta}} C^{\vec{\alpha}}_{\vec{\beta}} z^{\vec{\beta}},  
\notag \\
(\bar{w}(\bar{z}))^{\vec{\alpha}}e^{-\bar{\phi}(\bar{w})/\hbar} 
&= \sum_{\vec{\beta}} \bar{C}^{\vec{\alpha}}_{\vec{\beta}} \bar{z}^{\vec{\beta}}.
\end{align}

By using (\ref{vac-dPhi}), the bases are also written as
\begin{align}
 |\vec{m} \rangle_a \underline{{}_a\langle \vec{n}|}
 &= \sqrt{\frac{\vec{n}!}{\vec{m}!}} 
 \sum_{\vec{k}} H_{\vec{n}, \vec{k}}^a 
 (z)^{\vec{m}} (\bar{z})^{\vec{k}} e^{-\Phi_a/\hbar}, \notag \\
 |\vec{m} \rangle_b \underline{{}_b\langle \vec{n}|}
 &= \sqrt{\frac{\vec{n}!}{\vec{m}!}} 
 \sum_{\vec{k}} H_{\vec{n}, \vec{k}}^b 
 (w)^{\vec{m}} (\bar{w})^{\vec{k}} e^{-\Phi_b/\hbar}. 
\end{align} 
{} From the (\ref{a_b_3}) 
\begin{align}
|\vec{m} \rangle_b \underline{{}_b\langle \vec{n}|}
&= \sqrt{\frac{\vec{n}!}{\vec{m}!}} 
 \sum_{\vec{k}} H_{\vec{n}, \vec{k}}^b 
 (\sum_{\vec{\alpha}} C^{\vec{m}}_{\vec\alpha} z^{\vec{\alpha}} ) 
(\sum_{\vec{\beta}} \bar{C}^{\vec{k}}_{\vec\beta} \bar{z}^{\vec{\beta}} ) e^{-\Phi_a/\hbar} .
\end{align}

Finally, we obtain transformation between the bases,
\begin{align}
T^{ab} : F_{U_a} \rightarrow F_{U_b},
\end{align}
as
\begin{align}
|\vec{m} \rangle_b  \underline{{}_b\langle \vec{n}|}
= \sum_{\vec{i},\vec{j}}
T_{\vec{m}\vec{n}}^{ba, \vec{i}\vec{j}} |\vec{i} \rangle_a \underline{{}_a\langle \vec{j}|},
\end{align}
where 
\begin{align}
T_{\vec{m}\vec{n}}^{ba, \vec{i}\vec{j}}
=
\sqrt{\frac{\vec{n}!}{\vec{m}!}} 
\sqrt{\frac{\vec{i}!}{\vec{j}!}} 
 \sum_{\vec{k}} H_{\vec{n}, \vec{k}}^b 
 ( C^{\vec{m}}_{\vec{i}} ) 
(\sum_{\vec{\beta}} \bar{C}^{\vec{k}}_{\vec\beta} H^{a -1}_{\vec{\beta} \vec{j}} ) .
\end{align}

Using this transformation, the twisted Fock representation is extended to $M$.
We call it the twisted Fock representation of $M$.

\section{Trace operation} \label{sect5}

A trace operation to the Fock algebra is studied in this section.
A trace density $\mu$ of noncommutative manifolds $(M, *)$ is defined as
a density such that
\begin{align}\label{cycl}
\int_M \mu f * g = \int_M \mu g * f
\end{align}
for any functions $f \in C^{\infty}(M)$ and $g \in C_0^{\infty}(M)$, where 
$C_0^{\infty}$ is used as a set of compactly supported smooth functions.
Let ${\rm Tr}_M : C^{\infty}(M) \rightarrow {\mathbb C}\cup \infty$ be an integration 
with this trace density $ \mu $:
\begin{align}
{\rm Tr}_M  f := \int_M \mu f ~.
\end{align}
The existence of the trace density for the noncommutative K\"ahler manifolds
with the deformation quantization with separation of variables are 
guaranteed by the study in
\cite{Karabegov:1998hm}.
Note that elements of the basis of the twisted Fock algebras 
$|\vec{n} \rangle \underline{ \langle \vec{m}| }$ are not necessarily
compactly supported functions on an each local coordinate open set, in general.

Let $\{ (U_p , \phi_p )\}$ is an atlas of a K\"ahler manifold $M$ and we
use the notation $V_p = \phi_p (U_p) \subset {\mathbb R}^{2N}$.
For a bounded function given in the form of $f*g$, an integral over 
$\displaystyle M=\cup_p U_p$
 of $f*g$ with the trace density $\mu_{p}$ on $V_p$ is defined such as
\begin{align}
\int_M f*g \mu = \sum_{p} \int_{V_p} \rho_p f*g \mu_p ,
\end{align}
where $\rho_p$ is a partition of unity.
Note that 
\begin{align}
\sum_{p} \int_{V_p} \rho_p (f*g) \mu_p = 
\sum_{p} \int_{V_p} (\rho_p f ) *g \mu_p,
\label{cyclic_func}
\end{align}
because $\rho_p$ is an element of a partition of unity.
Therefore for a compact closed K\"ahler manifold $M$,
cyclic symmetries (\ref{cycl}) hold for 
arbitrary $f , g \in C^{\infty}(M)$.
In the following of this section, we fix a partition of unity on $M$.

Let us define a linear operation for the Fock algebra as follows.
\begin{defn}\label{trace}
Let $(U_p , \phi_p)$ be a chart of $M$. 
The local linear operation ${\rm Sp}_p $ on each $U_p$ is defined as 
a linear map from $F_{U_p}$ to ${\mathbb C}\cup \infty$ such that
\begin{align}
{\rm Sp}_p A * B = {\rm Sp}_p B* A
\end{align}
for arbitrary elements $A$ and $B$ of twisted Fock representation, and 
\begin{align}
{\rm Sp}_p | \vec{0} \rangle_p {}_p \langle \vec{0} | 
= c_p.
\end{align}
Here $c_p$ is a constant depending on $p$.
\end{defn}
Note that 
for the case that the considering K\"ahler manifold $M$ is ${\mathbb C}^n$,
the ${\rm Sp}_{{\mathbb C}^n}$ is equal to the trace operation 
${\rm Tr}$ up to the $c_p$. (See Example 1. in Section \ref{sect6}.)

\begin{rem}
The cyclic symmetry of the ${\rm Sp}_p$ operation and the commutation relations
of the creation and annihilation operators determine the the ${\rm Sp}_p$ of
bases of the twisted Fock representation.
\begin{align}
{\rm Sp}_p | \vec{n} \rangle_p \underline{ {}_p \langle \vec{m} |} = c_p
\delta_{\vec{n} \vec{m}}
=c_p \delta_{n_1 m_1} \delta_{n_2 m_2} \cdots \delta_{n_N m_N} . \label{tr_2}
\end{align}
\end{rem}
Indeed, the fact that 
the trace of commutator 
$|\vec{n} \rangle_p \underline{{}_p \langle \vec{m} |}$ 
and number operator $N_i = a_i^{\dagger} \underline{a}_i $ is zero,
\begin{align}
0 = {\rm Sp}_p 
 [N_i ~, ~ |\vec{n} \rangle_p \underline{{}_p \langle \vec{m} |} ]_*
 = (n_i - m_i ) {\rm Tr_p}  
 |\vec{n} \rangle_p \underline{{}_p \langle \vec{m} |},
 ~~~~ ( i= 1, \cdots , N ), 
\end{align}
implies
\begin{align}
 {\rm Sp}_p  |\vec{n} \rangle_p \underline{{}_p \langle \vec{m} |} 
 = c \delta_{\vec{n} \vec{m}},
\end{align}
where $c$ is some constant.
Furthermore, because of
$\displaystyle |\vec{n} \rangle_p = \frac{1}{\sqrt n_i} a_i^{\dagger} 
|\vec{n} - \vec{e}_i \rangle_p $,
\begin{align}
{\rm Sp}_p  |\vec{n} \rangle_p \underline{{}_p \langle \vec{n} |}
&= \frac{1}{n_i} {\rm Sp}_p ~a_i^{\dagger} 
 |\vec{n} - \vec{e}_i \rangle_p 
 \underline{{}_p \langle \vec{n} - \vec{e}_i |}\underline{a}_i
\notag \\
&=  \frac{1}{n_i} {\rm Sp}_p (N_i +1) |\vec{n} - \vec{e}_i \rangle_p 
\underline{{}_p \langle \vec{n} - \vec{e}_i |} \notag \\
&={\rm Sp}_p  |\vec{n} - \vec{e}_i \rangle_p 
\underline{{}_p \langle \vec{n} - \vec{e}_i |}\\
&\vdots \notag \\
&= {\rm Sp}_p  |\vec{0} \rangle_p \underline{{}_p \langle \vec{0} |}= c_p .
\end{align}
These results show the equation (\ref{tr_2}). 
Note that (\ref{tr_2}) are derived by using only commutation relations 
between the $a_i^{\dagger}$'s and $\underline{a}_i$'s and 
cyclic symmetry ${\rm Sp}_p A*B = {\rm Sp}_p B*A$.

Let us consider the relation between the trace operations and 
integration.
\begin{prop} \label{local trace}
If cyclic symmetries
\begin{align} \label{sym}
\int_U \mu_U 
 ~(a^\dagger)_*^{\vec{m}} * | \vec{0} \rangle \langle \vec{0} |
 * (\underline{a})_*^{\vec{n}} 
= \int_U \mu_U 
 ~|\vec{0} \rangle \langle \vec{0}|
 * (\underline{a})_*^{\vec{n}} *
(a^\dagger)_*^{\vec{m}} 
\end{align}
are satisfies for arbitrary $\vec{n} , \vec{m} $,
and 
$ c_0 := \int_U \mu_U | \vec{0} \rangle \langle \vec{0}| 
 = \int_U \mu_U \exp(-\Phi/\hbar)$ 
is finite,
then 
\begin{align}
\int_U \mu_U ~|\vec{m} \rangle \underline{\langle \vec{n}|}
= c_0 \delta_{\vec{m}, \vec{n}} .
\end{align}
\end{prop}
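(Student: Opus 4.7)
The plan is to start from the definition $|\vec{m}\rangle\underline{\langle\vec{n}|} = \frac{1}{\sqrt{\vec{m}!\vec{n}!}}(a^\dagger)_*^{\vec{m}} * |\vec{0}\rangle\langle\vec{0}| * (\underline{a})_*^{\vec{n}}$ and apply the cyclic symmetry (\ref{sym}) to move $(a^\dagger)_*^{\vec{m}}$ from the leftmost to the rightmost position. I would then normal-order $(\underline{a})_*^{\vec{n}} * (a^\dagger)_*^{\vec{m}}$ using the commutation relations (\ref{comm-rel-1+1}), obtaining $\sum_{\vec{k}} \binom{\vec{n}}{\vec{k}} \binom{\vec{m}}{\vec{k}} \vec{k}!\, (a^\dagger)_*^{\vec{m}-\vec{k}} * (\underline{a})_*^{\vec{n}-\vec{k}}$. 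Since $|\vec{0}\rangle\langle\vec{0}| * a_i^\dagger = 0$, only the term with $\vec{k} = \vec{m}$ survives; this requires $\vec{m} \leq \vec{n}$ componentwise, so the integral vanishes identically when $\vec{m} \not\leq \vec{n}$. In the remaining case, the problem reduces to evaluating $J_{\vec{l}} := \int_U \mu_U \, |\vec{0}\rangle\langle\vec{0}| * (\underline{a})_*^{\vec{l}}$ at $\vec{l} = \vec{n} - \vec{m}$, with combinatorial prefactor $\vec{n}!/(\vec{n}-\vec{m})!$.

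The heart of the argument is then to prove $J_{\vec{l}} = c_0\, \delta_{\vec{l},\vec{0}}$. Using (\ref{vac-dPhi}) I would rewrite $|\vec{0}\rangle\langle\vec{0}| * (\underline{a})_*^{\vec{l}} = \vec{l}! \sum_{\vec{k}} H_{\vec{l}, \vec{k}}\, |\vec{0}\rangle\langle\vec{0}| * (a)_*^{\vec{k}}$, which by Lemma \ref{ssu} simplifies to $\vec{l}! \sum_{\vec{k}} H_{\vec{l},\vec{k}}\, \bar{z}^{\vec{k}} e^{-\Phi/\hbar}$. The crucial step is then to show $\int_U \mu_U \, \bar{z}^{\vec{k}} e^{-\Phi/\hbar} = c_0\, \delta_{\vec{k},\vec{0}}$: applying (\ref{sym}) with $\vec{n} = \vec{0}$ together with $|\vec{0}\rangle\langle\vec{0}| * (a^\dagger)_*^{\vec{k}} = 0$ yields $\int_U \mu_U \, z^{\vec{k}} e^{-\Phi/\hbar} = 0$ for $\vec{k} \neq \vec{0}$, and taking complex conjugates---legitimate since both $\mu_U$ and $e^{-\Phi/\hbar}$ are real---transfers this vanishing onto the antiholomorphic moments. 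Combined with the normalization $H_{\vec{l},\vec{0}} = \delta_{\vec{l},\vec{0}}$, this produces $J_{\vec{l}} = c_0\, \delta_{\vec{l},\vec{0}}$.

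Plugging back gives, for $\vec{m} \leq \vec{n}$,
\begin{equation*}
\int_U \mu_U \, |\vec{m}\rangle\underline{\langle\vec{n}|} = \frac{1}{\sqrt{\vec{m}!\vec{n}!}} \cdot \frac{\vec{n}!}{(\vec{n}-\vec{m})!}\, c_0\, \delta_{\vec{n}-\vec{m},\vec{0}} = c_0\, \delta_{\vec{m},\vec{n}},
\end{equation*}
and zero when $\vec{m} \not\leq \vec{n}$, completing the argument. The main obstacle is exactly the evaluation of $J_{\vec{l}}$: the hypothesis (\ref{sym}) on its own does not give a direct cyclic identity for $|\vec{0}\rangle\langle\vec{0}| * (\underline{a})_*^{\vec{l}}$, so one must route through (\ref{vac-dPhi}) to re-express derivative-of-potential operators as antiholomorphic coordinates, and then exploit the reality of the measure together with Proposition \ref{comp-conj} to propagate the vanishing of holomorphic moments onto the antiholomorphic side.
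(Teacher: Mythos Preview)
Your proof is correct and follows the same skeleton as the paper---apply the cyclic hypothesis (\ref{sym}) and then exploit the Heisenberg algebra (\ref{comm-rel-1+1}) together with $|\vec{0}\rangle\langle\vec{0}|*a_i^\dagger=0$---but you supply considerably more justification than the paper does. The paper's own proof is a three-line display that passes directly from $\frac{1}{\sqrt{\vec{m}!\vec{n}!}}\int_U\mu_U\,(a^\dagger)_*^{\vec{m}}*|\vec{0}\rangle\langle\vec{0}|*(\underline{a})_*^{\vec{n}}$ to $\delta_{\vec{m},\vec{n}}\int_U\mu_U\,|\vec{0}\rangle\langle\vec{0}|$ without comment; in particular it never isolates or evaluates the residual integral $J_{\vec{l}}=\int_U\mu_U\,|\vec{0}\rangle\langle\vec{0}|*(\underline{a})_*^{\vec{l}}$. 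Your route---rewriting via (\ref{vac-dPhi}) in terms of antiholomorphic monomials, using (\ref{sym}) at $\vec{n}=\vec{0}$ to kill the holomorphic moments $\int_U\mu_U\,z^{\vec{k}}e^{-\Phi/\hbar}$, and then invoking reality of $\mu_U$ and $e^{-\Phi/\hbar}$ together with $H_{\vec{l},\vec{0}}=\delta_{\vec{l},\vec{0}}$---is a genuine addition: it shows how the stated hypothesis (\ref{sym}) \emph{alone} suffices, without tacitly assuming a stronger cyclic property. The only point to flag is that reality of the trace density $\mu_U$ is being used but is not proved in the paper; it is however standard for star products satisfying Proposition~\ref{comp-conj}, so this is a mild assumption.
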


\begin{proof}
\begin{align}
 \int_U \mu_U ~|\vec{m} \rangle \underline{\langle \vec{n}|}
 &= \frac{1}{\sqrt{\vec{m}! \vec{n}!}} \int_U \mu_U 
 ~(a^\dagger)_*^{\vec{m}} * |\vec{0} \rangle \langle \vec{0}|
 * (\underline{a})_*^{\vec{n}} \nonumber \\
 &= \delta_{\vec{m}, \vec{n}} \int_U \mu_U |\vec{0} \rangle \langle \vec{0}|
 \nonumber \\
 &= c_0 \delta_{\vec{m}, \vec{n}} .
\end{align}
\end{proof}
For example, for ${\mathbb C}^N$ and ${\mathbb C}H^N$ we
can chose $U$ and $\Phi$ to satisfy the conditions 
in Proposition \ref{local trace}.
In such cases, ${\rm Sp}$ operation on $U$ 
\begin{align}
{\rm Sp}_U | \vec{m} \rangle \langle \vec{n} | := \delta_{\vec{m} \vec{n}} c_U
\end{align}
is expressed by the above integration, i.e.
\begin{align} \label{Sp_Tr}
{\rm Sp}_U | \vec{m} \rangle \langle \vec{n} | 
=\frac{c_U}{c_0} \int_U \mu_g ~|\vec{m} \rangle \underline{\langle \vec{n}|} .
\end{align}
 Here $c_U$ is a some constant.
Then the results of the trace operation of the twisted Fock algebra 
are given by easy algebraic calculation of ${\rm Sp}$.

For general K\"ahler manifolds 
there might not exist the open covering $M= \cup_p U_p$ 
such that each $U_p$ satisfies conditions (\ref{sym}) in this proposition.
In such case, (\ref{Sp_Tr}) does not work.
In addition, we have to introduce a partition of unity to describe the 
integration over a whole manifold, in general case.
But it is unknown whether
the partition of unity belongs to the twisted Fock algebra or not for the 
general case.
Therefore, we can not naively compare $\rm Sp$ operations with $\rm Tr$.
If globally defined twisted Fock algebra exists on a  K\"ahler manifolds,
then integral is evaluated by algebraic process.
This problem is discussed for some cases in Section \ref{sect6}.


Then how can we estimate the trace operation by using ${\rm Sp}$?
The ${\rm Sp}$ is related with integral over $V_p$ under some conditions.
\begin{prop}
Let $(U_p , \phi_p)$ be a chart satisfying 
$\phi_p (U_p) = V_p \subset {\mathbb R}^{2N}$ and $\rho_p $ be a 
a partition of unity corresponding to $U_p$. 
Consider that volumes of every overlapping domains 
between $U_p$ and $U_q ~ (q\neq p)$ 
are bounded by arbitrary positive number $\epsilon$, 
and values of all commutators between $\rho_p$ and $a_i^{\dagger}$ or
 $\underbar{a}_j$ are bounded by $1/\epsilon^{1-\delta_p}$ where
 $\delta_p$ is a real number.  
Then the integration is related with ${\rm Sp}$
\begin{align} \label{apporo}
{\rm Sp}_p | \vec{n} \rangle_p \underline{ {}_p \langle \vec{m} |}
=  \int_{V_p}  \rho_p | \vec{n} \rangle_p \underline{ {}_p \langle \vec{m} |} \mu_g +O(\epsilon^{{\delta}_p}), \
\end{align}
with the $c_p$ is a constant given by
\begin{align}
c_p = \int_{V_p} e^{-\Phi_p / \hbar} \mu_g .
\end{align}
\end{prop}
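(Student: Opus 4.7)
The plan is to show $\int_{V_p} \rho_p |\vec n\rangle_p \underline{{}_p\langle \vec m|}\,\mu_g = c_p\delta_{\vec n,\vec m} + O(\epsilon^{\delta_p})$; combined with $\mathrm{Sp}_p|\vec n\rangle_p\underline{{}_p\langle \vec m|} = c_p\delta_{\vec n,\vec m}$ from (\ref{tr_2}), this yields (\ref{apporo}). The argument has two ingredients: an approximate cyclic property for the localized functional $T_p(f) := \int_{V_p}\rho_p f\,\mu_g$, and an algebraic reduction using the Fock commutation relations already derived.

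\textbf{Step 1 (Approximate cyclicity of $T_p$).} I would start from the global trace cyclicity $\int_M \mu_g(f*g) = \int_M \mu_g(g*f)$ and decompose via $\sum_q \rho_q = 1$ to obtain
\begin{align*}
T_p([f,g]_*) = -\sum_{q \neq p} T_q([f,g]_*).
\end{align*}
For $f,g$ built from the generators $a^\dagger_i, \underline a_j$ and $|\vec 0\rangle\langle\vec 0| = e^{-\Phi_p/\hbar}$ on $U_p$, each summand on the right is supported in the overlap $V_p\cap V_q$, whose volume is by hypothesis at most $\epsilon$. The integrand on the overlap is controlled by rewriting $\rho_q[f,g]_*$ using the hypothesized bounds $|[\rho_q,a_i^\dagger]_*|,|[\rho_q,\underline a_j]_*| \le \epsilon^{-1+\delta_p}$ (which bound the rate at which the partition passes from $\rho_p \approx 1$ to $\rho_q \approx 1$ across the overlap). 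Multiplying volume and commutator bounds gives the total error $O(\epsilon \cdot \epsilon^{-1+\delta_p}) = O(\epsilon^{\delta_p})$.

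\textbf{Step 2 (Algebraic reduction).} Using the definition
\begin{align*}
|\vec n\rangle_p\underline{{}_p\langle \vec m|} = \tfrac{1}{\sqrt{\vec n!\,\vec m!}}\,(a^\dagger)^{\vec n}_* * |\vec 0\rangle\langle\vec 0| * (\underline a)^{\vec m}_*,
\end{align*}
I would apply the approximate cyclic property to cycle $(\underline a)^{\vec m}_*$ past $(a^\dagger)^{\vec n}_* * |\vec 0\rangle\langle\vec 0|$, yielding
\begin{align*}
T_p\bigl(|\vec n\rangle_p\underline{{}_p\langle \vec m|}\bigr) = \tfrac{1}{\sqrt{\vec n!\,\vec m!}}\,T_p\bigl(|\vec 0\rangle\langle\vec 0| * (\underline a)^{\vec m}_* * (a^\dagger)^{\vec n}_*\bigr) + O(\epsilon^{\delta_p}).
\end{align*}
Normal-ordering $(\underline a)^{\vec m}_* * (a^\dagger)^{\vec n}_*$ via $[\underline a_i, a_j^\dagger]_* = \delta_{ij}$ produces a sum of terms $(a^\dagger)^{\vec n-\vec k}_* * (\underline a)^{\vec m-\vec k}_*$; by (\ref{vacuum1})--(\ref{vacuum2}), left-multiplication by $|\vec 0\rangle\langle\vec 0|$ kills all terms except $\vec k = \vec n$ (requiring $\vec n \le \vec m$), leaving $\frac{\vec m!}{(\vec m-\vec n)!}|\vec 0\rangle\langle\vec 0| * (\underline a)^{\vec m-\vec n}_*$. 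A second application of approximate cyclicity together with $\underline a_i * |\vec 0\rangle\langle\vec 0| = 0$ forces this to vanish up to $O(\epsilon^{\delta_p})$ unless $\vec m = \vec n$, giving $T_p(|\vec n\rangle\underline{\langle\vec m|}) = \delta_{\vec n,\vec m}\,T_p(|\vec 0\rangle\langle\vec 0|) + O(\epsilon^{\delta_p})$.

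\textbf{Step 3 (Vacuum normalization).} The remaining piece is $T_p(|\vec 0\rangle\langle\vec 0|) = \int_{V_p}\rho_p e^{-\Phi_p/\hbar}\mu_g = c_p - \int_{V_p}(1-\rho_p)e^{-\Phi_p/\hbar}\mu_g$. Since $(1-\rho_p)$ is supported in the overlap regions of total volume $O(\epsilon)$, this correction is $O(\epsilon) = O(\epsilon^{\delta_p})$ for $\delta_p \le 1$. Combining the three steps yields the claim. \textbf{Main obstacle:} the delicate part is Step 1, where one must propagate the commutator bound for $\rho_q$ with individual generators $a_i^\dagger,\underline a_j$ through the composite star-products $(a^\dagger)^{\vec n}_* * |\vec 0\rangle\langle\vec 0| * (\underline a)^{\vec m}_*$, using associativity and the explicit differential-operator form of $L_f, R_f$, to ensure the accumulated error stays $O(\epsilon^{\delta_p})$ for fixed $\vec n, \vec m$ rather than being amplified by uncontrolled polynomial factors in the operator sizes.
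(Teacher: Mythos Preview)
Your proposal is correct and follows essentially the same approach as the paper's proof: both arguments hinge on the observation that $\mathrm{Sp}_p$ is completely determined by cyclic symmetry and the Fock commutation relations (as established in the Remark preceding the proposition), so it suffices to show that the localized integral $T_p(f)=\int_{V_p}\rho_p f\,\mu_g$ enjoys cyclic symmetry up to an $O(\epsilon^{\delta_p})$ defect, that defect being the product of the overlap volume $\epsilon$ and the commutator bound $\epsilon^{-(1-\delta_p)}$. The paper's proof is considerably terser---it simply asserts that the cyclic symmetry fails only through the noncommutativity of $\rho_p$ with $a_i^\dagger,\underline a_j$ on the overlap and declares the estimate ``trivial''---whereas you spell out the mechanism via the decomposition $T_p([f,g]_*)=-\sum_{q\neq p}T_q([f,g]_*)$ and carry out the algebraic reduction of Step~2 and the vacuum normalization of Step~3 explicitly. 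Your identification of the ``Main obstacle'' (controlling error accumulation when commuting $\rho_p$ past composite star-products of fixed but arbitrary length $|\vec n|,|\vec m|$) is apt: this is precisely the analytic point the paper elides, and the statement as formulated does not furnish the uniformity needed to make that step fully rigorous.
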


\begin{proof}
At the above remark, we made sure that this ${\rm Sp}_p$ operation 
is determined by the cyclic symmetry of the trace and the algebraic relations of $a_i^{\dagger}$'s and $\underline{a}_i$'s.
By the partition of unity $\rho_p$, the operation $\int_{V_p} \mu_g$
has cyclic symmetry for any elements in the twisted Fock algebra.
The only problem is $\rho_p$ does not commute with 
$a_i^{\dagger}$'s and $\underline{a}_i$'s on the over lapping region.
{}From the condition  that volumes of every overlapping domains 
between $U_p$ and $U_q ~ (q\neq p)$ 
are bounded by an arbitrary positive number $\epsilon$, 
and values of all commutators between $\rho_p$ and $a_i^{\dagger}$ or 
$\underbar{a}_j$ are bounded by $1/\epsilon^{1-\delta}$, 
(\ref{apporo}) is trivial.
\end{proof}



\section{Examples} \label{sect6}
In this section, some examples of the Fock representations are given.\\

\noindent
{\bf Example 1}: Fock representation of ${\mathbb C}^N$

The first example is ${\mathbb C}^N$.
The K\"ahler potential is given by 
\begin{align}
 \Phi_{{\mathbb C}^N} = \sum_{i=1}^N |z^i |^2 . 
\label{euclid_kahler_potential}
\end{align} 
By the process given in Section \ref{sect2}, the star product
is easily obtained as
\begin{align}
f*g &= \sum_{n=0}^{\infty} \frac{\hbar^n}{n!}
\delta^{k_1 l_1} \cdots \delta^{k_n l_n}
(\partial_{\bar{k}_1} \cdots \partial_{\bar{k}_n} f )
(\partial_{{l}_1} \cdots \partial_{{l}_n} g ) . \label{star_euclid}
\notag
\end{align}
This star product was given in \cite{Karabegov1996}.
We put 
\begin{align}
a_i^{\dagger} = z^i, ~~~
\underline{a}_i = \frac{1}{\hbar}\bar{z}^i, ~~~
a_i = \bar{z}^i, ~~~
\underline{a}_i^{\dagger} = \frac{1}{\hbar}{z}^i .
\end{align}
Then they satisfy the commutation relations:
\begin{align}
[\underline{a}_i ,~ a_j^{\dagger} ]_*= \delta_{ij},~~~
 [{a}_i ,~ \underline{a}_j^{\dagger} ]_* = \hbar \delta_{ij} 
\end{align}
and the others are zero.
Since in this case the operators with the underline are essentially equal
to those without the underline, we omit the underline of the bra vectors. 
The basis of the twisted Fock algebra is given by
\begin{align}
| \vec{m} \rangle  \langle \vec{n} | 
= \frac{1}{\hbar^{|\vec{n}|}\sqrt{\vec{m}! \vec{n}!}}
(z)^{\vec{m}}e^{-\Phi/\hbar} (\bar{z})^{\vec{n}} . \label{base_euclid}
\end{align}
These are defined globally, so the trace operations 
for the twisted Fock algebras given by
integral over ${\mathbb C}^N$ is equal to the ${\rm Sp}$ operation with
$c_{{\mathbb C}^N} =1 $:
\begin{align}
{\rm Tr}_{{\mathbb C}^N}  | \vec{m} \rangle \langle \vec{n} | 
:= \frac{\hbar^N}{\pi^N}\int_{{\mathbb C}^N} dx^{2D}  
| \vec{m} \rangle \langle \vec{n} | 
={\rm Sp}_{{\mathbb C}^N}  | \vec{m} \rangle \langle \vec{n} | 
= \delta_{\vec{m} \vec{n}}.
\end{align}
These results coincide with well known facts for noncommutative
Euclidean spaces.\bigskip \\

\noindent {\bf Example 2} : 
Fock representation of a cylinder\\ 
The second example is a cylinder $C$.  Let us consider the cylinder as a
special case of ${\mathbb C}$ with an equivalence relation $z \sim z + 2\pi$.
We choose its open covering as $C = U_a \cup U_b$, where $\displaystyle
U_a= \{ z_a \in {\mathbb C} | -\frac{\pi}{2} < {\rm Re}~ z_a < \pi \}$
and $\displaystyle U_b= \{ z_b \in {\mathbb C} | \frac{\pi}{2} < {\rm
Re}~ z_b < 2\pi \}$.  Then there are two overlap regions. The first one
is $A= \{ z \in {\mathbb C} | \frac{\pi}{2} < {\rm Re} z < \pi \} \subset U_a
\cap U_b$, and the transition functions on it is given by the identity
$z_a = z_b $.  The other overlap is $B= \{ z \in {\mathbb C} | \frac{3
\pi}{2} < {\rm Re} z < 2\pi \} \subset U_b$. On $B$, the transition
function is given by $z_b = z_a + 2\pi$.  The K\"ahler potential is
defined by (\ref{euclid_kahler_potential}) for $N=1$ and the star
products on the cylinder is also given by (\ref{star_euclid}) on each of
the open subset $U_a$ and $U_b$.  The basis of the local twisted Fock
algebra $| \vec{m} \rangle \langle \vec{n} |$ on $U_a$ and $U_b$ are
also given as $(\ref{base_euclid})$.  However, we can not describe them
globally since they do not have translation invariance under $z
\rightarrow z +2\pi$.  Thus ${\rm Tr}_C$ can not be
represented by using ${\rm Sp}_a$ and ${\rm Sp}_b$. \bigskip \\

\noindent
{\bf Example 3} : Fock representation of noncommutative $\mathbb{C}P^N$\\
We give an explicit expression of the twisted Fock representation
of noncommutative of $\mathbb{C}P^N$.
In this case, the twisted Fock representation on an open set 
is essentially the same as the representation given in 
\cite{Sako:2012ws,Sako:2013noa,Maeda:2015bnb,Sako:2015yrr}.
(In a context of a Fuzzy $\mathbb{C}P^N$, which is a different approach to 
noncommutative $\mathbb{C}P^N$, the Fock representations are 
discussed in \cite{Alexanian,Alexanian:2001qj,CarowWatamura:2004ct}.)

Let denote $\zeta^a ~(a=0, 1, \dots, N)$ homogeneous coordinates and 
$\bigcup U_a ~(U_a = \{[\zeta^0: \zeta^1: \cdots : \zeta^N]\} 
| \zeta^a \neq 0)$ an open covering of $\mathbb{C}P^N$. We define
inhomogeneous coordinates on $U_a$ as 
\begin{equation}
 z_a^0 = \frac{\zeta^0}{\zeta^a}, ~\cdots,~
  z_a^{a-1} = \frac{\zeta^{a-1}}{\zeta^a}, ~
  z_a^{a+1} = \frac{\zeta^{a+1}}{\zeta^a}, ~\cdots,~
  z_a^N = \frac{\zeta^N}{\zeta^a}.
\end{equation}
We choose a K\"ahler potential on $U_a$ which satisfies the condition
(\ref{Phi_cond})
\begin{equation}
 \Phi_a = \ln (1 + |z_a|^2),
\end{equation}
where $|z_a|^2 = \sum_i |z_a^i|^2$.
A star product on $U_a$ is given as follows
 \cite{Sako:2012ws,Sako:2013noa}:
\begin{align}
 f*g &= \sum_{n=0}^\infty c_n (\hbar) 
 g_{j_1 \bar{k}_1} \cdots g_{j_n \bar{k}_n} 
 \left(D^{j_1} \cdots D^{j_n} f\right)
 D^{\bar{k}_1} \cdots D^{\bar{k}_n} g,
 \label{Lf-cov}
\end{align}
where
\begin{align}
 c_n (\hbar) &= \frac{\Gamma(1-n+1/\hbar)}{n! \Gamma(1+1/\hbar)},~~~~ 
D^{\bar i} = g^{{\bar i} j} \partial_j ,~~~~ 
D^{i} = g^{i {\bar j} } \partial_{\bar j}.
\end{align}

On $U_a$, creation and annihilation operators are given as
\begin{equation}
 a_{a,i}^\dagger = z_a^i, ~~~
  \underline{a_{a,i}} = \frac{1}{\hbar} \partial_i \Phi_a
  = \frac{\bar{z}_a^i}{\hbar{(1+|z_a|^2)}}, ~~~
  a_{a,i} = \bar{z}_a^i, ~~~
  \underline{a_{a,i}}^\dagger = \frac{1}{\hbar} \partial_{\bar{i}} \Phi_a
  = \frac{z_a^i}{\hbar{(1+|z_a|^2)}}.
\end{equation}
and a vacuum is
\begin{equation}
 |\vec{0} \rangle_a \underline{_a \langle \vec{0}|}
  = e^{-\Phi_a/\hbar} = (1 + |z_a|^2)^{-1/\hbar}.
\end{equation}
Bases of the Fock representation on $U_a$ are constructed as
\begin{align}
 |\vec{m} \rangle_a \underline{_a \langle \vec{n}|}
 &= \frac{1}{\sqrt{\vec{m}! \vec{n}!}}
 (a_a^\dagger)_*^{\vec{m}} * 
 |\vec{0} \rangle_a \underline{_a \langle \vec{0}|}
 * (\underline{a_a})_*^{\vec{n}} 
 \nonumber \\
 &= \frac{1}{\sqrt{\vec{m}! \vec{n}!} \hbar^{|n|}}
 (z_a)_*^{\vec{m}} * e^{-\Phi_a/\hbar}
 * (\partial \Phi_a)_*^{\vec{n}}.
\end{align}
By using (\ref{vacuum1}), (\ref{vacuum2}) and the following relation
which is shown in \cite{Sako:2012ws},
\begin{align}
 (\partial \Phi_a)_*^{\vec{n}}
 &= \frac{\hbar^{|n|} \Gamma(1/\hbar+1)}{\Gamma(1/\hbar-|n|+1)}
 (\partial \Phi_a)^{\vec{n}} \nonumber \\
 &= \frac{\hbar^{|n|} \Gamma(1/\hbar+1)}{\Gamma(1/\hbar-|n|+1)}
 \left(\frac{\bar{z}_a}{1+|z_a|^2}\right)^{\vec{n}},
\end{align}
the bases can be explicitly written as
\begin{equation}
 |\vec{m} \rangle_a \underline{_a \langle \vec{n}|}
  = \frac{\Gamma(1/\hbar+1)}{\sqrt{\vec{m}! \vec{n}!}
  \Gamma(1/\hbar-|n|+1)}
  (z_a)^{\vec{m}} (\bar{z}_a)^{\vec{n}} e^{-\Phi/\hbar}. \label{base_CP}
\end{equation}
By comparing this equation and (\ref{base_H_ver}), $H_{\vec{m}, \vec{n}}$ is
obtained as
\begin{equation}
 H_{\vec{m}, \vec{n}} = \delta_{\vec{m}, \vec{n}}
  \frac{\Gamma(1/\hbar+1)}{\vec{m}! \Gamma(1/\hbar-|m|+1)},
\end{equation}
and it is easily seen that this formally satisfies 
$e^{\Phi_a/\hbar} = \sum H_{\vec{m}, \vec{n}} 
(z_a)^{\vec{m}}(\bar{z})^{\vec{n}}$.

Let us consider transformations between the Fock representations on
$U_a$ and $U_b$ ($a < b$).
The transformations for the coordinates and the K\"ahler potential on
$U_a \bigcap U_b$ are
\begin{align}
 z_a^i &= \frac{z_b^i}{z_b^a}, ~~
 (i = 0, 1, \dots, a-1, a+1, \dots, b-1, b+1, \dots, N),
 \qquad
 z_a^b = \frac{1}{z_b^a}, \\
 \Phi_a &= \Phi_b - \ln z_b^a - \ln \bar{z}_b^a. 
\end{align} 
Thus, $|\vec{m} \rangle_a \underline{_a \langle\vec{n}|}$ is written on 
$U_a \bigcap U_b$ as
\begin{align}
 |\vec{m} \rangle_a \underline{_a \langle\vec{n}|}
 &= \frac{\Gamma(1/\hbar+1)}{\sqrt{\vec{m}!\vec{n}!}
 \Gamma(1/\hbar-|n|+1)} e^{-\Phi_b/\hbar} 
 \nonumber \\
 & ~~
 \times (z_b^0)^{m_0} \cdots
 (z_b^{a-1})^{m_{a-1}}(z_b^a)^{1/\hbar-|m|}
 (z_b^{a+1})^{m_{a+1}} \cdots
 (z_b^{b-1})^{m_{b-1}}(z_b^{b+1})^{m_{b+1}}
 \cdots (z_b^N)^{m_N}
 \nonumber \\
 & ~~
 \times (\bar{z}_b^0)^{n_0} \cdots
 (\bar{z}_b^{a-1})^{n_{a-1}}(\bar{z}_b^a)^{1/\hbar-|n|}
 (\bar{z}_b^{a+1})^{n_{a+1}} \cdots
 (\bar{z}_b^{b-1})^{n_{b-1}}(\bar{z}_b^{b+1})^{n_{b+1}}
 \cdots (\bar{z}_b^N)^{n_N},
\end{align}
where
\begin{align}
 \vec{m} &= (m_0, \dots, m_{a-1}, m_{a+1}, \dots, m_N), \\
 \vec{n} &= (n_0, \dots, n_{a-1}, n_{a+1}, \dots, n_N).
\end{align}
We should treat
$(z_b^a)^{1/\hbar-|m|}$ and $(\bar{z}_b^a)^{1/\hbar-|n|}$
carefully, because if they are not monomials 
some trick is needed to express them as the twisted Fock representation.
We here make comments about the trick briefly.
{}From the expression of the basis (\ref{base_CP}), 
a function $f(z,\bar{z}) e^{-\Phi / \hbar}$ 
is expressed as the Twisted Fock algebra
when $f(z, \bar{z})$ is given as a Taylor expansion in $z$ and $\bar{z}$.
For simplicity, we consider the one dimensional case.
When a non-monomial function $z^{q}$ of some complex coordinate $z$ 
with a nonpositive integer $q$ is given, $z^{q}$ should be Taylor expanded 
around the some non-zero point $p \in {\mathbb C}$ to express it as a twisted 
Fock algebra:
\begin{align}
z^q = p^q + qp^{q-1} (z-p)+ \frac{q(q-1)}{2}p^{q-2} (z-p)^2 + \cdots \ .
\end{align}
In the case that the radius of convergence of this expansion is
not enough to cover the whole of $U_b$, 
we have to divide $U_b$ into smaller ones, 
$U_b = \cup_{b_i} U_{b_i}$ and choose proper points for the Taylor expansions
on each $U_{b_i}$, to make each expansions converge.
For the higher dimensional ${\mathbb C}P^N$ 
we can use a similar procedure to the one dimensional case,
and the twisted Fock algebra for ${\mathbb C}P^N$ is derived.

To avoid such kind of complications concerning 
$(z_b^a)^{1/\hbar-|m|}$ and $(\bar{z}_b^a)^{1/\hbar-|n|}$, 
we can introduce a slightly different representation
from the above twisted Fock representation of ${\mathbb C}P^N$.
Let us consider the case that the noncommutative parameter is the
following value,  
\begin{align}
 & 1/\hbar = L \in \mathbb{Z}, ~~L \geq 0, 
 \label{L} 
\end{align}
Then, we define $F_a^L$ on $U_a$ as a subspace
of a local twisted Fock algebra $F_{U_a}$, 
\begin{align}
 F_a^L = \{ \sum_{\vec{m}, \vec{n}} A_{\vec{m} \vec{n}} 
 |\vec{m} \rangle_a \underline{_a \langle \vec{n}|} ~|~ 
 A_{\vec{m} \vec{n}} \in \mathbb{C}, ~|m| \leq L, ~|n| \leq L\} .
\end{align} 
The bases on $U_a$ are related to 
those on $U_b$ as,
\begin{equation}
\sqrt{\frac{(L-|n|)!}{(L-|m|)!}} 
|\vec{m} \rangle_a \underline{_a \langle\vec{n}|}
= \sqrt{\frac{(L-|n'|)!}{(L-|m'|)!}}
|\vec{m'} \rangle_b \underline{_b \langle\vec{n'}|}, \label{trans_finite}
\end{equation}
where
\begin{align}
 \vec{m'} &= (m_0, \cdots, m_{a-1}, L-|m|, m_{a+1}, \cdots, 
 m_{b-1}, m_{b+1}, \cdots, m_N), \\
 \vec{n'} &= (n_0, \cdots, n_{a-1}, L-|n|, n_{a+1}, \cdots, 
 n_{b-1}, n_{b+1}, \cdots, n_N).
\end{align}
Using the expression of (\ref{trans_finite}), we can define 
$|\vec{m} \rangle_a \underline{_a \langle\vec{n}|}$ on the whole of $U_b$.
Therefore, the operators in $F_a^L$ can be extended to the whole of
$\mathbb{C}P^N$ by using the relation like (\ref{trans_finite}).

Under the condition (\ref{L}), 
the creation and annihilation operators on $F_{U_a}$ is changed from
the definition (\ref{creation_anihilation}).
Similarly to (\ref{ca-op1}) and (\ref{ca-op2}), let us define 
a creation operator ${a^L_{a,i}}^\dagger$ and
an annihilation operator $\underline{a}^L_{a,i}$ 
restricted on $F_a^L$ by
\begin{align}
 {a^L_{a,i}}^\dagger &= \sum_{0 \leq |n| \leq L-1} \sqrt{n_i +1} 
 |\vec{n} + \vec{e}_i \rangle_a \underline{_a \langle \vec{n} |}
 = z_a^i \left[ 1- \left(\frac{|z_a|^2}{1+|z_a|^2} \right)^L \right], \\
 \underline{a}^L_{a,i} &= \sum_{0 \leq |n| \leq L-1} \sqrt{n_i +1} 
 |\vec{n}  \rangle_a \underline{_a \langle \vec{n} + \vec{e}_i |}
 = L \frac{\bar{z}_a^i}{1+|z_a|^2}.
\end{align}
By the restriction on $F^L_{a}$, ${a^L_{a,i}}^\dagger$ is shifted from
$z_a^i$. These operators satisfy the following commutation relation,
\begin{align}
 [\underline{a}^L_{a,i}, ~{a^L_{a,j}}^\dagger]
 &= \delta_{ij} \left(
 \sum_{0 \leq |n| \leq L} |\vec{n} \rangle_a \underline{_a \langle \vec{n} |}
 - \sum_{|n|=L} (n_i +1) |\vec{n} \rangle_a \underline{_a \langle \vec{n} |}
 \right) \nonumber \\
 &= \delta_{ij}
 - \delta_{ij} \left(\frac{|z_a|^2}{1+|z_a|^2}\right)^L
 \left(1 + L\frac{|z_a^i|^2}{|z_a|^2}\right).
\end{align}
\bigskip


\noindent
{\bf Example 4} : Fock representation of noncommutative $\mathbb{C}H^N$\\
Here, we give an explicit expression of the Fock representation
of noncommutative of $\mathbb{C}H^N$ \cite{Sako:2012ws,Sako:2013noa}.

We choose a K\"ahler potential satisfies the condition
(\ref{Phi_cond})
\begin{equation}
 \Phi = - \ln (1 - |z|^2),
\end{equation}
where $|z|^2 = \sum_i^N |z^i|^2$.
A star product  is given as follows
 \cite{Sako:2012ws,Sako:2013noa}:
\begin{align}
 f*g &= \sum_{n=0}^\infty c_n (\hbar) 
 g_{j_1 \bar{k}_1} \cdots g_{j_n \bar{k}_n} 
 \left(D^{j_1} \cdots D^{j_n} f\right)
 D^{\bar{k}_1} \cdots D^{\bar{k}_n} g,
\end{align}
where
\begin{align}
 c_n (\hbar) &= \frac{\Gamma(1/\hbar)}{n! \Gamma(n+1/\hbar)},~~~~ 
D^{\bar i} = g^{{\bar i} j} \partial_j ,~~~~ 
D^{i} = g^{i {\bar j} } \partial_{\bar j}.
\end{align}

The creation and annihilation operators are given as
\begin{equation}
 a_{i}^\dagger = z^i, ~~~
  \underline{a_{i}} = \frac{1}{\hbar} \partial_i \Phi
  = \frac{\bar{z}^i}{\hbar{(1-|z|^2)}}, ~~~
  a_{i} = \bar{z}^i, ~~~
  \underline{a_{i}}^\dagger = \frac{1}{\hbar} \partial_{\bar{i}} \Phi
  = \frac{z^i}{\hbar{(1-|z|^2)}}.
\end{equation}
and a vacuum is
\begin{equation}
 |\vec{0} \rangle \langle \vec{0}|
  = e^{-\Phi/\hbar}= (1-|z|^2)^{1/\hbar} .
\end{equation}
Bases of the Fock representation on $\mathbb{C}H^N$ are constructed as
\begin{align}
 |\vec{m} \rangle \underline{ \langle \vec{n}|}
 &= \frac{1}{\sqrt{\vec{m}! \vec{n}!}}
 (a^\dagger)_*^{\vec{m}} * 
 |\vec{0} \rangle \underline{ \langle \vec{0}|}
 * (\underline{a})_*^{\vec{n}} 
 \nonumber \\
 &= \frac{1}{\sqrt{\vec{m}! \vec{n}!} \hbar^{|n|}}
 (z)_*^{\vec{m}} * e^{-\Phi/\hbar}
 * (\partial \Phi )_*^{\vec{n}}.
\end{align}
By using (\ref{vacuum1}), ($\ref{vacuum2}$) and the following relation
which is shown in \cite{Sako:2012ws},
\begin{align}
 (\partial \Phi)_*^{\vec{n}}
 &= \frac{(-\hbar)^{|n|} \Gamma(1/\hbar+|n|)}{\Gamma(1/\hbar)}
 \left(\frac{\bar{z}}{1-|z|^2}\right)^{\vec{n}},
\end{align}
the bases can be explicitly written as
\begin{equation}
 |\vec{m} \rangle \underline{ \langle \vec{n}|}
  = \frac{(-1)^{|n|}\Gamma(1/\hbar+|n|)}{\sqrt{\vec{m}! \vec{n}!}
  \Gamma(1/\hbar )}
  (z)^{\vec{m}} (\bar{z})^{\vec{n}} (1-|z|^2)^{1/\hbar}.
\end{equation}
These are defined globally.
For ${\mathbb C}H^N$, trace density is given by the usual Riemannian 
volume form
\begin{align}
\mu_g = \frac{1}{(1-|z|^2)^{N+1}}.
\end{align}
Therefore,
\begin{align}
c_0 &= \int_{{\mathbb C}H^N} dz^{2D} \mu_g |0 \rangle \langle 0 |
\notag \\
&=  \int_{{\mathbb C}H^N}  dz^{2D} \frac{1}{(1-|z|^2)^{\frac{1}{\hbar} -(N+1)}} \notag \\
&= \pi^N \frac{\Gamma (1/\hbar - N)}{\Gamma (1/ \hbar )},
\end{align}
and the trace is given by the integration
\begin{align}
{\rm Tr }_{{\mathbb C}H^N} | \vec{m} \rangle \underline{\langle \vec{n} |}
 = \frac{\Gamma (1/ \hbar )}{\pi^N \Gamma (1/\hbar - N)} 
\int_{{\mathbb C}H^N}   dz^{2D} \mu_g | \vec{m} \rangle \underline{\langle \vec{n} |} 
= \delta_{\vec{m} \vec{n}}.
\end{align}
\bigskip

At the end of this section,
we mention a special class of K\"ahler manifolds. 
The above examples, ${\mathbb C}^N$, cylinder, ${\mathbb C}P^N$ and 
${\mathbb C}H^N$, have K\"ahler potentials which depend only on 
the absolute values of complex coordinates:
\begin{align}
\Phi (z, \bar{z}) = 
\tilde{\Phi}( |z_1|, |z_2|, \cdots ,|z_N| ) . \label{special}
\end{align}
For this case, we obtain the usual Fock algebra by the following proposition.

\begin{prop}\label{special_phi}
When a K\"ahler potential is an analytic function and has the form of 
(\ref{special}), 
$|\vec{0} \rangle \langle \vec{0}|* (a)^{\vec{m}}_*$  and 
$(a^{\dagger})^{\vec{m}}_* * |\vec{0} \rangle \langle \vec{0} | $ are equal
to $|\vec{0} \rangle \langle \vec{0}| * (\underline{a} )^{\vec{m}}_*$ 
and $ (\underline{a}^{\dagger})^{\vec{m}}_* * |\vec{0} \rangle \langle \vec{0}|$
up to a constant, respectively. 
\end{prop}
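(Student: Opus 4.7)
The plan is to exploit the fact that the assumption (\ref{special}) forces the expansion coefficients $H_{\vec m,\vec n}$ appearing in (\ref{H}) to be diagonal, and then to feed this diagonal structure into the two identities (\ref{vac-dPhi}) and (\ref{dbarPhi-vac}), which immediately collapse the sums to single terms.

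First, I would observe that if $\Phi(z,\bar z)=\tilde\Phi(|z_1|,\dots,|z_N|)$ and $\Phi$ is analytic in $z^i,\bar z^i$, then $e^{\Phi/\hbar}$ is a convergent power series in the real combinations $|z_i|^2 = z^i\bar z^i$. Consequently every monomial $z^{\vec m}\bar z^{\vec n}$ that appears in (\ref{H}) must satisfy $\vec m=\vec n$, so
\begin{equation}
H_{\vec m,\vec n} = h_{\vec m}\,\delta_{\vec m,\vec n}
\end{equation}
for some sequence of real numbers $h_{\vec m}$ (the diagonal coefficients of the expansion of $e^{\tilde\Phi/\hbar}$ in the variables $z^i\bar z^i$).

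Second, I would substitute this diagonal form directly into the two propositions already established, (\ref{vac-dPhi}) and (\ref{dbarPhi-vac}). The Kronecker delta eliminates every term in the sums except the one with $\vec m=\vec n$, giving
\begin{align}
|\vec 0\rangle\langle\vec 0| * (\underline a)_*^{\vec n}
&= \vec n!\, h_{\vec n}\; |\vec 0\rangle\langle\vec 0| * (a)_*^{\vec n}, \\
(\underline a^\dagger)_*^{\vec n} * |\vec 0\rangle\langle\vec 0|
&= \vec n!\, h_{\vec n}\; (a^\dagger)_*^{\vec n} * |\vec 0\rangle\langle\vec 0|,
\end{align}
which is precisely the asserted proportionality with scalar constant $\vec n!\, h_{\vec n}$.

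I do not expect any genuine obstacle. The only subtle point is the initial reduction of $H_{\vec m,\vec n}$ to diagonal form, which is an elementary consequence of the uniqueness of power-series coefficients combined with the assumption that $\Phi$, and hence $e^{\Phi/\hbar}$, depends on $z^i$ and $\bar z^i$ only through the products $z^i\bar z^i$. Everything afterwards is substitution into results that were derived in the previous section.
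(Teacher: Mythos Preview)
Your argument is correct and takes a genuinely different route from the paper. The paper proves the proposition by a direct computation: it writes $(\partial\Phi/\hbar)_*^{\vec n}$ explicitly using $L_{\partial_i\Phi}=\hbar e^{-\Phi/\hbar}\partial_i e^{\Phi/\hbar}$, extracts the factor $(\bar z)^{\vec n}$ coming from the chain rule applied to $\tilde\Phi(|z_1|,\dots,|z_N|)$, and then invokes Lemma~\ref{ssu} to replace the remaining real-analytic factor by its value at $z=0$. You instead observe that the hypothesis~(\ref{special}) forces $H_{\vec m,\vec n}=h_{\vec m}\,\delta_{\vec m,\vec n}$ and feed this into the already-established identities (\ref{vac-dPhi})--(\ref{dbarPhi-vac}), collapsing each sum to a single term. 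Your route is shorter and makes systematic use of the machinery built in Section~\ref{sect3}; indeed the paper itself adopts exactly this diagonal-$H$ argument immediately afterwards to give an alternative derivation of the ensuing corollary. The paper's direct computation has the minor advantage of exhibiting the proportionality constant in closed form as $C(\vec n)=e^{-\tilde\Phi/\hbar}(\partial/\partial|z|)^{\vec n}e^{\tilde\Phi/\hbar}\big|_{z=0}$, whereas in your version it appears as $\vec n!\,h_{\vec n}$; but since the statement only asserts proportionality this is immaterial.
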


\begin{proof}
{}From the identity 
$ L_{\partial_i \Phi } = \hbar e^{-\Phi /\hbar } \partial_i e^{\Phi /\hbar}$,
\begin{align}
&|\vec{0} \rangle \langle \vec{0}| * ( \partial_{1} \Phi )^{n_1}_* * \cdots * 
(\partial_{i_N} \Phi )^{n_N}_* 
\notag \\
&=\hbar^{|n|} |\vec{0} \rangle \langle \vec{0} |* \left( \bar{z}_{1}^{n_1} \cdots \bar{z}_{N}^{n_N}
e^{-\tilde{\Phi}/\hbar} \left( \frac{\partial}{\partial {|z_{1}|}} \right)^{n_1} \cdots 
\left( \frac{\partial}{\partial {|z_{N}|}}\right)^{n_N} e^{\tilde{\Phi}/\hbar} 
\right). \notag 
\end{align}
By using Lemma \ref{ssu},
$e^{-\Phi /\hbar } * f(z , \bar{z}) = e^{-\Phi /\hbar } * f(0 , \bar{z})$,
$$
e^{-\tilde{\Phi}/\hbar} \left( \frac{\partial}{\partial {|z_{1}|}} \right)^{n_1} \cdots 
\left( \frac{\partial}{\partial {|z_{N}|}}\right)^{n_N} e^{\tilde{\Phi}/\hbar} 
$$
in the above equation
can be replaced by a constant, which we here denote by $C(\vec{n})$. Then
\begin{align}
&|\vec{0} \rangle \langle \vec{0}| * ( \partial_{1} \Phi )^{n_1}_* * \cdots * 
(\partial_{i_N} \Phi )^{n_N}_* 
= \hbar^{|n|} C(\vec{n}) |\vec{0}\rangle \langle \vec{0}| * (\bar{z} )^{\vec{n}}_* .
\end{align}
\end{proof}

As a corollary we obtain the following.
\begin{cor}
When a K\"ahler potential is an analytic function and has the form of
 (\ref{special}),  
$| \vec{m} \rangle \underline{ \langle \vec{n} | }
= \hbar^{|n|}C (\vec{n}) | \vec{m} \rangle \langle \vec{n} | $.
Here 
$C(\vec{n}) = e^{-\tilde{\Phi}/\hbar} 
\left( \frac{\partial}{\partial {|z_{1}|}} \right)^{n_1} \cdots 
\left( \frac{\partial}{\partial {|z_{N}|}}\right)^{n_N} 
e^{\tilde{\Phi}/\hbar} |_{z=0}.
$
\end{cor}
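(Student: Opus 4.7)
The plan is to unwind the definition of the twisted basis element $|\vec m\rangle\underline{\langle\vec n|}$ and then apply Proposition \ref{special_phi} directly. By the definition of the basis,
\[
|\vec m\rangle\underline{\langle\vec n|}
= \frac{1}{\sqrt{\vec m!\,\vec n!}}\,(a^{\dagger})_{*}^{\vec m}*|\vec 0\rangle\langle\vec 0|*(\underline{a})_{*}^{\vec n}
= \frac{1}{\hbar^{|n|}\sqrt{\vec m!\,\vec n!}}\,(z)_{*}^{\vec m}*e^{-\Phi/\hbar}*(\partial\Phi)_{*}^{\vec n},
\]
where I have used $\underline a_i=\partial_i\Phi/\hbar$, so that $(\underline a)_*^{\vec n}=\hbar^{-|n|}(\partial\Phi)_*^{\vec n}$. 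The first step is to isolate the substring $|\vec 0\rangle\langle\vec 0|*(\partial\Phi)_*^{\vec n}$ on the right. Proposition \ref{special_phi} (in the precise form written out in its proof) asserts that this substring equals $\hbar^{|n|}C(\vec n)\,|\vec 0\rangle\langle\vec 0|*(\bar z)_*^{\vec n}$.

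The remaining steps are purely formal. Because $(z)_*^{\vec m}$ is a holomorphic factor on the left it reduces to ordinary pointwise multiplication by $z^{\vec m}$ by (\ref{l_holo}), and similarly $(\bar z)_*^{\vec n}$ on the right reduces to pointwise multiplication by $\bar z^{\vec n}$ by (\ref{r_antiholo}). After substituting, the $\hbar^{-|n|}$ coming from the rescaling of $\underline a$ cancels the $\hbar^{|n|}$ emitted by the proposition, and the expression collapses to
\[
|\vec m\rangle\underline{\langle\vec n|}
= \frac{C(\vec n)}{\sqrt{\vec m!\,\vec n!}}\,z^{\vec m}\bar z^{\vec n}\,e^{-\Phi/\hbar}.
\]
Identifying the non-twisted basis element $|\vec m\rangle\langle\vec n|$ with the normalization used in Example 1, namely $|\vec m\rangle\langle\vec n|=\hbar^{-|n|}(\vec m!\,\vec n!)^{-1/2}\,z^{\vec m}\bar z^{\vec n}\,e^{-\Phi/\hbar}$, gives $|\vec m\rangle\underline{\langle\vec n|} = \hbar^{|n|}C(\vec n)\,|\vec m\rangle\langle\vec n|$, which is the claim.

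There is no real obstacle at the level of the corollary itself; all the genuine work has already been absorbed into Proposition \ref{special_phi}, whose proof rests on Lemma \ref{ssu} together with the identity $L_{\partial_i\Phi}=\hbar e^{-\Phi/\hbar}\partial_i e^{\Phi/\hbar}$, and on the fact that the hypothesis (\ref{special}) ensures that the quantity $e^{-\tilde\Phi/\hbar}\bigl(\partial/\partial|z_1|\bigr)^{n_1}\cdots\bigl(\partial/\partial|z_N|\bigr)^{n_N}e^{\tilde\Phi/\hbar}$ is a well defined function of the $|z_i|$ alone, which can then be evaluated at $z=0$ to produce the constant $C(\vec n)$. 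The only item requiring minor care is the book-keeping of the $\hbar$ powers introduced by $\underline a_i=\partial_i\Phi/\hbar$, but these cancel cleanly against the $\hbar^{|n|}$ supplied by the proposition, so no additional calculation is needed.
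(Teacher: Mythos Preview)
Your proposal is correct and matches the paper's intended primary derivation: the paper presents this result literally ``as a corollary'' of Proposition~\ref{special_phi}, and you have simply written out the bookkeeping that makes this explicit, including the correct tracking of the $\hbar^{|n|}$ factors and the identification of the non-underlined basis $|\vec m\rangle\langle\vec n|$ via the normalization of Example~1. The paper also records a second, independent derivation that avoids Proposition~\ref{special_phi} altogether: one observes that for a potential of the form~(\ref{special}) the expansion coefficients $H_{\vec m,\vec n}$ of $e^{\Phi/\hbar}$ are diagonal, $H_{\vec m,\vec n}=\frac{C(\vec n)}{\vec n!}\delta_{\vec m,\vec n}$, and then reads the result directly off formula~(\ref{base_H_ver}); this route is slightly more structural in that it localizes the hypothesis entirely in the diagonality of $H$, whereas your route relies on the operator identity $L_{\partial_i\Phi}=\hbar e^{-\Phi/\hbar}\partial_i e^{\Phi/\hbar}$ together with Lemma~\ref{ssu}.
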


This corollary is also shown by using the definition of
$H_{\vec{m}, \vec{n}}$ in (\ref{H}) 
without Proposition \ref{special_phi}.
{}For a K\"ahler potential $\tilde{\Phi}$, $H_{\vec{m}, \vec{n}}$ is
proportional to $\delta_{\vec{m}, \vec{n}}$,
\begin{align}
 e^{\tilde{\Phi}/\hbar} 
 &= \sum_{\vec{m}, \vec{n}} \frac{C(\vec{n})}{\vec{n}!} 
 \delta_{\vec{m}, \vec{n}}
 (z)^{\vec{m}} (\bar{z})^{\vec{n}}, \\
 H_{\vec{m}, \vec{n}} &= \frac{C(\vec{n})}{\vec{n}!} 
 \delta_{\vec{m}, \vec{n}}.
\end{align}
Here $C(\vec{n})$ is a constant and is given as
\begin{align}
 C(\vec{n}) =  
\left( \frac{\partial}{\partial {|z_{1}|}} \right)^{n_1} \cdots 
\left( \frac{\partial}{\partial {|z_{N}|}}\right)^{n_N} 
e^{\tilde{\Phi}/\hbar} |_{z=0}.
\end{align}
By using (\ref{vac-dPhi}), we find
\begin{align}
 |\vec{m} \rangle \underline{ \langle \vec{n}|} 
 = C(\vec{n}) |\vec{m} \rangle \langle \vec{n}| .
\end{align}

\section{Summary} \label{sect7}
Twisted Fock representations of general noncommutative K\"ahler
manifolds are constructed.  The noncommutative K\"ahler manifolds
studied in this article are given by deformation quantization with
separation of variables.  Using this type of deformation quantization,
the twisted Fock representation which constructed based on two sets of
creation and annihilation operators was introduced with the concrete
expressions of them on a local coordinate chart.  The corresponding
functions are given by the local complex coordinates, the K\"ahler
potentials and partial derivatives of them with respect to 
the coordinates.  
The dictionary to translate bases of the twisted Fock
representation into functions is given as table \ref{table:result}.  
They are defined
on a local coordinate chart, and they are extended by the transition
functions given in Section \ref{sect4}.  This extension is achieved by
essentially the result that the star products with separation of
variables have a trivial transition function.
We also gave examples of the twisted 
Fock representation of K\"ahler manifolds,
${\mathbb C}^N$, cylinder, ${\mathbb C}P^N$ and ${\mathbb C}H^N$.
The trace operation as an integration over a manifold 
is obtained by traces of matrix representations for the 
${\mathbb C}^N$ and ${\mathbb C}H^N$.
\bigskip

\noindent
{\bf Acknowledgments} \\
A.S. was supported in part by JSPS
KAKENHI Grant Number 16K05138.


\end{document}